\newtheorem{result}{Main Result}
\newtheorem{theorem}{Theorem}
\newtheorem{definition}{Definition}
\newtheorem{thm}{Theorem}[section]
\newtheorem{prop}[thm]{Proposition}
\newtheorem{corollary}{Corollary}[theorem]
\theoremstyle{definition}
\newtheorem{example}[thm]{Example}
\theoremstyle{remark}
\begin{document}

\title[Mechanisms for Network Growth]{Mechanisms for Network Growth that Preserve Spectral and Local Structure}

\keywords{Dynamics on Networks, Evolving Networks, Mathematical and numerical analysis of networks}

\author{L. A. Bunimovich$^1$}

\author{B. Z. Webb$^2$}

\maketitle

\begin{center}
$^{1}$ School of Mathematics, Georgia Institute of Technology, 686 Cherry Street, Atlanta, GA 30332\\
$^{2}$ Department of Mathematics, Brigham Young University, TMCB 308, Provo, UT 84602\\
E-mail: $^1$bunimovich@math.gatech.edu and $^2$bwebb@mathematics.byu.edu
\end{center}

\begin{abstract}
We introduce a method that can be used to evolve the topology of a network in a way that preserves both the network's spectral as well as local structure. This method is quite versatile in the sense that it can be used to evolve a network's topology over any collection of the network's elements. This evolution preserves both the eigenvector centrality of these elements as well as the eigenvalues of the original network. Although this method is introduced as a tool to model network growth, we show it can also be used to compare the topology of different networks where two networks are considered similar if their evolved topologies are the same. Because this method preserves the spectral structure of a network, which is related to the network's dynamics, it can also be used to study the interplay of network growth and function. We show that if a network's dynamics is \emph{intrinsically stable}, which is a stronger version of the standard notion of stability, then the network remains intrinsically stable as the network's topology evolves. This is of interest since the growth of a network can have a destabilizing effect on the network's dynamics, in general. In this sense the methods developed here can be used as a tool for designing mechanisms of network growth that ensure a network remains stabile as it grows.
\end{abstract}

\section{Introduction}

Networks in the biological, social, and technological sciences fill specific needs. Gene regulatory networks play a central role in cell growth, neural networks are responsible for certain cognitive processes, and metabolic networks determine the biochemical properties of cells \cite{KS08,BS09,BO04}. Social networks such as the interactions of social insects \cite{CF14,HEOGF13} and professional sports teams \cite{FAIPW12} are used to achieve a common goal or outcome. Technological networks such as the internet together with the World Wide Web allow access to information.

The task these networks are able to perform depend on two common features. First, each network is built out of a number of components that typically have a complicated structure of interactions, which we refer to as the network's \emph{topology}. Second, the vast majority of these networks are \emph{dynamic} in the sense that each network component has a certain behavior that depends on those components it interacts with. The pattern of behavior that emerges from these interactions is the network's \emph{dynamics}.

The extent to which a network can perform a certain task typically depends both on the network's topology and dynamics. For example, an electrical grid with a well-designed structure of power lines, substations, etc. would likely be considered substandard if it only provided power intermittently. Alternatively, if power was supplied at all times but the design of the power grid limited the number of customers, the same would be true. In this way both the topology and dynamics of a network are important to the network's ability to perform its function.

To complicate things, real-world networks are not only dynamic in terms of the behavior of their elements but also in terms of their topology. For example, neural networks are continually adding new neurons and connections to process and store incoming information. Social networks are constantly reorganizing themselves as new relationships are formed and old ones are dissolved. Similarly, the World Wide Web has an ever changing structure of interactions as web pages are updated, added, and deleted (see \cite{GS09} for a review of the changing topology of networks).

For the sake of clarity, throughout the paper we will refer to a network's \emph{evolution} as the changes that occur in the network's topology over time. We will use the term \emph{dynamics} when referring to the collective behavior of the elements that make up the network, although both the changes in the network's topology and the behavior of these elements are in fact aspects of the network's dynamics.

One of the central questions in network science is how a complex system, i.e. a network, with an evolving structure of interactions can maintain a specific function. For instance, a beating heart maintains its dynamics even as the cellular structure grows over time. Similarly, electrical grids need to be engineered in a way that allows for power to be supplied even as new power lines, stations, etc. are added to the existing grid.

The issue is that network growth, although important to the network's function, can have a destabilizing effect on the network's dynamics, which in turn can lead to network failure. Cancer, for instance, which is the uncontrolled growth of unwanted cells is an example of this phenomena in biological networks.

In this paper we develop a theory that offers a flexible method for modeling the growth of a network. This method is built around the theory of isospectral network transformations \cite{BW12,BWBook}, which describes how certain changes in a network's structure effect the network's \emph{spectrum}, i.e. the eigenvalues associated with the network. Since the eigenvalues of a network are related to both the network's structure and dynamics, this allows us to study how changes in the graph's structure effect the network's spectrum and how this impacts the network's dynamics and ultimately its function.

To be resilient to failures networks typically have a certain amount of redundancy in their topology \cite{MSA08,TSE99}. Hence, as the topology of a network evolves the network is likely evolves in ways that preserves certain structures useful to the network. Such statistically significant structures are often referred to as network motifs \cite{A07}. Using this idea we evolve the topology of a network in a way that preserves to a large extent the network's \emph{local structure}, which has the effect of largely preserving the spectral properties of the network.

To evolve the topology of a network we begin by describing its structure of interactions by a graph $G$ whose vertices $V$ (nodes) represent the network's elements (components) and whose edges $E$ represent the network's interactions. The graph $G$ is evolved by selecting a subset of the network's elements $S\subset V$, which we will refer to as the \emph{core} of the evolved graph.

To evolve the graph $G$ we introduce the notion of a \emph{branch of components}. The components we consider are the strongly connected components of the graph made up of those vertices not in $S$. A branch of components consists of either a path or cycle of these components that begins and ends at a vertex of $S$ (see definition \ref{def:componentbranch}). By merging these branches together the result is the evolved graph $\mathcal{X}_S(G)$ (see definition \ref{def:exp}), which represents the network's topology at some later point in time.

Because this type of transformation preserves the components of the original graph, the evolved graph is in this sense \emph{locally indistinguishable} from the original graph. The difference is that the evolved graph will typically have many more of these components than the original graph. This difference in structure is reflected in the difference in the spectrum $\sigma(G)$ of the original graph $G$ and spectrum $\sigma(\mathcal{X}_S(G))$ of the evolved graph $\mathcal{X}_S(G)$. This is one of the paper's main results and is described by the following theorem (see section \ref{sec2}, theorem \ref{thm1}).

\begin{result}\textbf{(Spectra of Evolved Graphs)}
Let $G$ be a graph with vertex set $V$. If $S\subseteq V$ let $C_1,\dots,C_m$ be the strongly connected components of $G|\bar{S}$ where $\bar{S}$ is the complement of $S$. Then
\[
\sigma\big(\mathcal{X}_S(G)\big)=\sigma(G)\cup\sigma(C_1)^{n_1-1}\cup\sigma(C_2)^{n_2-1}\cup\dots\cup \sigma(C_m)^{n_m-1}
\]
where $n_i$ is the number of components $C_i$ in the evolved graph $\mathcal{X}_S(G)$ and $\sigma(C_i)^{n_i-1}$ denotes $n_i-1$ copies of the eigenvalues of $C_i$.
\end{result}

The spectrum of the evolved graph $\mathcal{X}_S(G)$ is then the spectrum of the original graph $G$ together with some collection of eigenvalues of the components $C_1,\dots,C_m$ of $G$. Therefore, if a network has a changing topology that can be modeled via a graph evolution then we can effectively predict changes in both the network's topology and spectrum over time. Since the dynamics of a network depends on its spectrum, the theory of graph evolutions developed in this paper is a potential tool to study the interplay of network growth and function.

A graph evolution also preserves the eigenvectors of graph in a certain way. Specifically, a graph evolution preserves the part of the graph's eigenvectors that correspond to the network's core under some mild conditions (see section \ref{sec2}, proposition \ref{prop:0}). An important consequence of this fact is that the eigenvector centrality of the network's core is preserved as the network evolves (see section \ref{sec2}, theorem \ref{prop10}).

\begin{result}\textbf{(Eigenvector Centrality of Evolved Graphs)}
Let $G=(V,E)$ be strongly connected with its eigenvector centrality given by the vector $\mathbf{p}$. If $S\subset V$ then the eigenvector centrality of $\mathcal{X}_S(G)$ is given by a vector $\mathbf{q}$ where $\mathbf{p}_S=\mathbf{q}_S$. The vectors $\mathbf{p}_S$ and $\mathbf{q}_S$ are respectively the vectors $\mathbf{p}$ and $\mathbf{q}$ restricted to the entries indexed by $S$.
\end{result}

Another important feature of a graph evolution is that it is extremely versatile in the sense that a graph $G$ can be evolved with respect to any subset of its vertex set. One of the applications of this fact, which we explore in this paper, is to use graph evolutions to determine whether two related or unrelated networks with graphs $G$ and $H$ are similar in the following way.

Is it is fairly straight-forward to devise a rule $\tau$ that selects a unique vertex set of any graph. For instance, the rule $\tau$ that selects those vertices with highest degree, eigenvalue centrality, clustering coefficient, etc. is such a rule, which we refer to as a \emph{structural rule}. Two graphs $G$ and $H$ are considered to be \emph{similar} to each other if they evolve into the same graph under the rule $\tau$, which we write as $\tau(G)\simeq\tau(H)$, where $\tau(G)$ is the graph $G$ evolved with respect to $\tau$. It turns out that this notion of similarity, which we refer to as \emph{evolution equivalence}, can be used to partition the space of all graphs into those graphs that are similar, i.e. are evolution equivalent, with respect to $\tau$ and those that are not. This is summarized in the following result (see section \ref{sec3}, theorem \ref{thm2}).

\begin{result}\textbf{(Evolution Equivalence)}
Suppose $\tau$ is a rule that selects a unique set of vertices from any graph, i.e. a structural rule. Then $\tau$ induces an equivalence relation $\sim$ on the set of all graphs where $G\sim H$ if $\tau(G)\simeq\tau(H)$.
\end{result}

One reason for designing such a rule $\tau$ is that most of the time it is not obvious that two different graphs are equivalent. That is, two graphs may be similar but until both graphs are evolved with respect to some rule $\tau$ this may be difficult to see. In this paper we show that by choosing an appropriate rule $\tau$ one can discover this similarity (see examples \ref{ex:evoequ} and \ref{ex:semiequ}).

In this way, a rule $\tau$ allows those studying a particular class of networks a way of comparing the \emph{evolved topology} of these networks and drawing conclusions about both the evolved and original networks. Of course, it is important that the rule $\tau$ be designed by the particular biologist, chemist, physicist, etc. to have some significance with respect to the nature of the networks under consideration.

In fact, the rule used can be completely arbitrary, i.e. the biologist, chemist, physicist can choose whatever set of elements she or he deems important and disregard the notion of choosing a rule altogether. Although having no fixed rule means that we lose the notion of evolution equivalence, it is still possible to compare the structure of the evolved graphs. What is important is that many networks currently under study are likely to have features that come to light as these networks are evolved.

Not only can a rule $\tau$ be used to discover the similarities between two graphs and their associated networks but this rule can be used to \emph{sequentially evolve} the structure of a graph $G$, which results in the sequence
\[
G, \ \tau(G), \ \tau^2(G), \ \tau^3(G),\dots, \ \tau^i(G),\dots
\]
This allows one to study the long-term or \emph{asymptotic evolution} of a network's topology under $\tau$. As opposed to many of the most well-known methods for evolving the topology of a graph, such as preferential attachment \cite{BA02}, the sequence of evolutions is deterministic and results in the unique graph $\tau^i(G)$ after each step. Moreover, because of the way in which a graph evolution is defined, the graph $\tau^i(G)$ becomes sparser after each iteration, which is important as most real networks are sparse \cite{N03,HG08}.

To demonstrate how this theory of graph evolutions can be used in the study of network dynamics we consider a class of \emph{dynamical networks}, which are dynamical systems with an underlying graph structure, that can be evolved with respect to a given rule $\tau$. The specific dynamical property we consider here is stability, which is observed in a number of important systems including neural networks \cite{Cao2003,Cheng2006,SChena2009,MCohen1983,LTao2011}, epidemic models \cite{Wang2008}, and in the study of congestion in computer networks \cite{Alpcan2005}.

We show that if a dynamical network $(F,X)$ is intrinsically stable, which is a stronger form of stability than the standard notion of stability (see definition \ref{def:intrinsic}), then the evolved version of this network $(F_{\tau},X_{\tau})$ under any rule $\tau$ remains intrinsically stable. This is summarized in the following theorem (see section \ref{sec:4}, theorem \ref{thm:evostability}).

\begin{result}\textbf{(Stability of Structurally Evolving Networks)}
Let $(F,X)$ be a dynamical network and $\tau$ a structural rule. If $(F,X)$ is intrinsically stable then the evolved network $(F_{\tau},X_{\tau})$ is also intrinsically stable.
\end{result}

Additionally, we show that stability by itself is not enough to guarantee that a network will remain stable as its topology evolves (see example \ref{ex:loss}). That is, a network may lose its stability as it grows if it is not intrinsically stable. That is, intrinsic stability provides a way of designing stable networks that maintain stability even as their topology evolves.

The paper is organized as follows. In section \ref{sec2} we introduce the notion of a graph evolution and prove our first main result regarding the spectrum of an evolved graph (theorem \ref{thm1}). In this section we also describe how a graph evolution effects the graph's eigenvectors and in particular preserves the eigenvector centralities of an associated network. In section \ref{sec3} we use this theory together with the idea of a structural rule $\tau$ to develop the concept of a evolution equivalence. That is, we show that any rule $\tau$ can be used to compare and analyze the structure of networks in a variety of ways (theorem \ref{thm2}).

In section \ref{sec:4} we use graph evolutions to evolve the structure of a dynamical system used to model network dynamics. Here we show that if such a network is intrinsically stable then an evolved version of the network is also stable (theorem \ref{thm:evostability}). Section \ref{conc} contains some concluding remarks. Section \ref{appendix} contains a proof of theorem \ref{thm1} together with the necessary parts of the theory of isospectral network transformations needed to prove this result.

\section{Structural Evolution of Graphs}\label{sec2}

The standard method used to describe the topology of a network is a graph. A \emph{graph} $G=(V,E,\omega)$ is composed of a \emph{vertex set} $V$, an \emph{edge set} $E$, and a function $\omega$ used to weight the edges $E$ of the graph. The vertex set $V$ represents the \emph{elements} of the network, while the edges $E$ represent the links or \emph{interactions} between these network elements. The weights of the edges given by $\omega$ measure the \emph{strength} of these interactions.

In general, the edges $E$ of a graph can either be \emph{directed} or \emph{undirected}. For instance, the edges in a graph representing the World Wide Web are directed since hyperlinks take a user from one page to another. The internet, which is a physical network of data connections between computers, is represented by a graph that has undirected edges since such connections can transfer data in both directions.

The techniques and results we present are valid for both directed as well as undirected graphs. To describe these both these techniques and results it is worth emphasizing that directed graphs are more general than undirected graphs. The reason is that an undirected graph can be considered to be a directed graph by replacing each of its edges by two directed edges that point in opposite directions. Similarly, weighted graphs are more general than unweighted graphs since an unweighted graph can be made into a weighted graph by giving each of its edges unit weight.

With this in mind, the graphs we consider in this paper are formally those graphs that are either directed or undirected and either weighted or unweighted. However, as any such graph can be considered to be a weighted directed graph then, without loss in generality, we consider those graphs $G=(V,E,\omega)$ that are both weighted and directed. That is, we let $V=\{v_1,\dots,v_n\}$, where $v_i$ represents the $i$th network element. We let $e_{ij}$ denote the edge that begins at $v_i$ and ends at $v_j$. In terms of the network, the edge $e_{ij}$ belongs to the edge set $E$ if the $i$th network element directly influences or is linked to the $j$th network element.

One of the most natural ways of investigating a graph is to analyze its path and cycle structure. This approach goes back to the origins of graph theory where Euler used these ideas to solve the K{\"o}nigsberg bridge problem \cite{Alex06}. A \emph{path}\index{path} $P$ in the graph $G=(V,E,\omega)$ is an ordered sequence of distinct vertices $P=v_1,\dots,v_m$ in $V$ such that $e_{i,i+1}\in E$ for $i=1,\dots,m-1$. If the vertices $v_1$ and $v_m$ are the same then $P$ is a \emph{cycle}\index{cycle}. If it is the case that a cycle contains a single vertex then we call this cycle a \emph{loop}\index{loop}.

A fundamental idea related to the structure of a graph is the notion of a strongly connected component. The idea is that there may be a path from $v_i$ to $v_j$ but no path from $v_j$ to $v_i$. In this case the $j$th vertex can be reached from the $i$th but not the other way around. In a strongly connected component every vertex can be reached from every other vertex and so every element in this component of the associated network can have an effect on every other element in the component.

A graph $G=(V,E,\omega)$ is \emph{strongly connected} if for any vertices $v_i,v_j\in V$ there is a path from $v_i$ to $v_j$ or $G$ consists of a single vertex. A \emph{strongly connected component} of a graph $G$ is a subgraph that is strongly connected and is maximal with respect to this property.

Because we are concerned with evolving the structure of a network in ways that preserve, at least locally, the structure of a graph, we will need the notion of a graph restriction. For a graph $G=(V,E,\omega)$ and a subset $S\subseteq V$ we let $G|S$ be the \emph{restriction} of the graph $G$ to the vertex set $S$, which is the subgraph of $G$ on the vertex set $S$ along with any edges of $E$ between vertices in $S$. Importantly, we let $\bar{S}$ denote the complement of $S$, so that the restriction $G|\bar{S}$ is the graph restricted to the complement of those vertices not in $S$.

The key to evolving the structure of a graph in a way that preserves to a large extent both the spectral properties of the graph and the graph's \emph{local structure} is to look at the strongly connected components of the restricted graph $G|\bar{S}$. If $C_1,\dots,C_m$ denote these strongly connected components then our goal is to find paths or cycles of these components, which we refer to as branches of components.

\begin{figure}
\begin{center}
\begin{tabular}{c}
    \begin{overpic}[scale=.275]{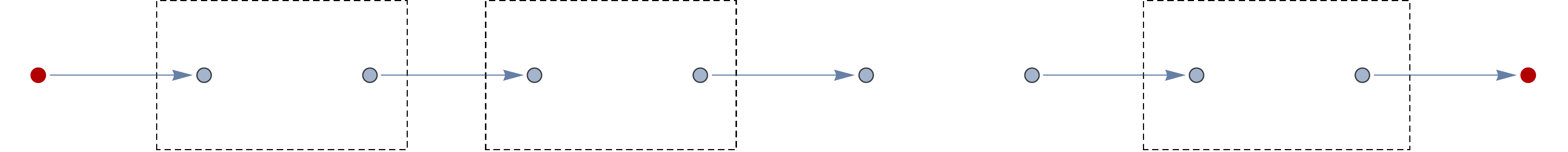}
    \put(-1,4.5){$v_i$}
    \put(6,6){$e_0$}
    \put(16,4){$C_1$}
    \put(27,6){$e_1$}
    \put(38,4){$C_2$}
    \put(49,6){$e_2$}
    \put(58.75,4.55){$\Huge \dots$}
    \put(66,6){$e_{m-1}$}
    \put(79.5,4){$C_m$}
    \put(91,6){$e_m$}
    \put(99,4.5){$v_j$}
    \end{overpic}
\end{tabular}
\end{center}
  \caption{A representation of a path of components is shown, consisting of the sequence $C_1,\dots,C_m$ of components beginning at vertex $v_i$ and ending at vertex $v_j$. From $C_k$ to $C_{k+1}$ there is a single directed edge $e_{k+1}$. From $v_i$ to $C_1$ and from $C_m$ to $v_j$ there is also a single directed edge.}\label{fig01}
\end{figure}

\begin{definition}\label{def:componentbranch} \textbf{(Component Branches)}
For a graph $G=(V,E,\omega)$ and vertex set $S\subseteq V$ let $C_1,\dots,C_m$ be strongly connected components of $G|\bar{S}$. If there are edges $e_0,e_1,\dots,e_m\in E$ and two vertices $v_i,v_j\in S$ such that\\
(i) $e_k$ is an edge from a vertex in $C_k$ to a vertex in $C_{k+1}$ for $k=1,\dots,m-1$;\\
(ii) $e_0$ is an edge from $v_i$ to a vertex in $C_1$; and\\
(iii) $e_m$ is an edge from a vertex in $C_m$ to $v_j$, then we call the sequence
\[
\beta=v_i,e_{0},C_1,e_{1},C_2,\dots,C_m,e_{m},v_{j}
\]
a \emph{path of components} of $G$ with respect to $S$. In the case that $v_i=v_j$ then $\beta$ is a \emph{cycle of components}. We call the collection $\mathcal{B}_S(G)$ of these paths and cycles the \emph{component branches} of $G$ with respect to $S$.
\end{definition}

A representation of the path of components described in definition \ref{def:componentbranch} is shown in figure \ref{fig01}. The sequence of components $C_1,\dots,C_m$ in this definition  can be empty in which case $m=0$ and $\beta$ is the path $\beta=v_1,v_2$ or loop if $v_1=v_2$.











\begin{figure}
\begin{center}
\begin{tabular}{cc}
    \begin{overpic}[scale=.18]{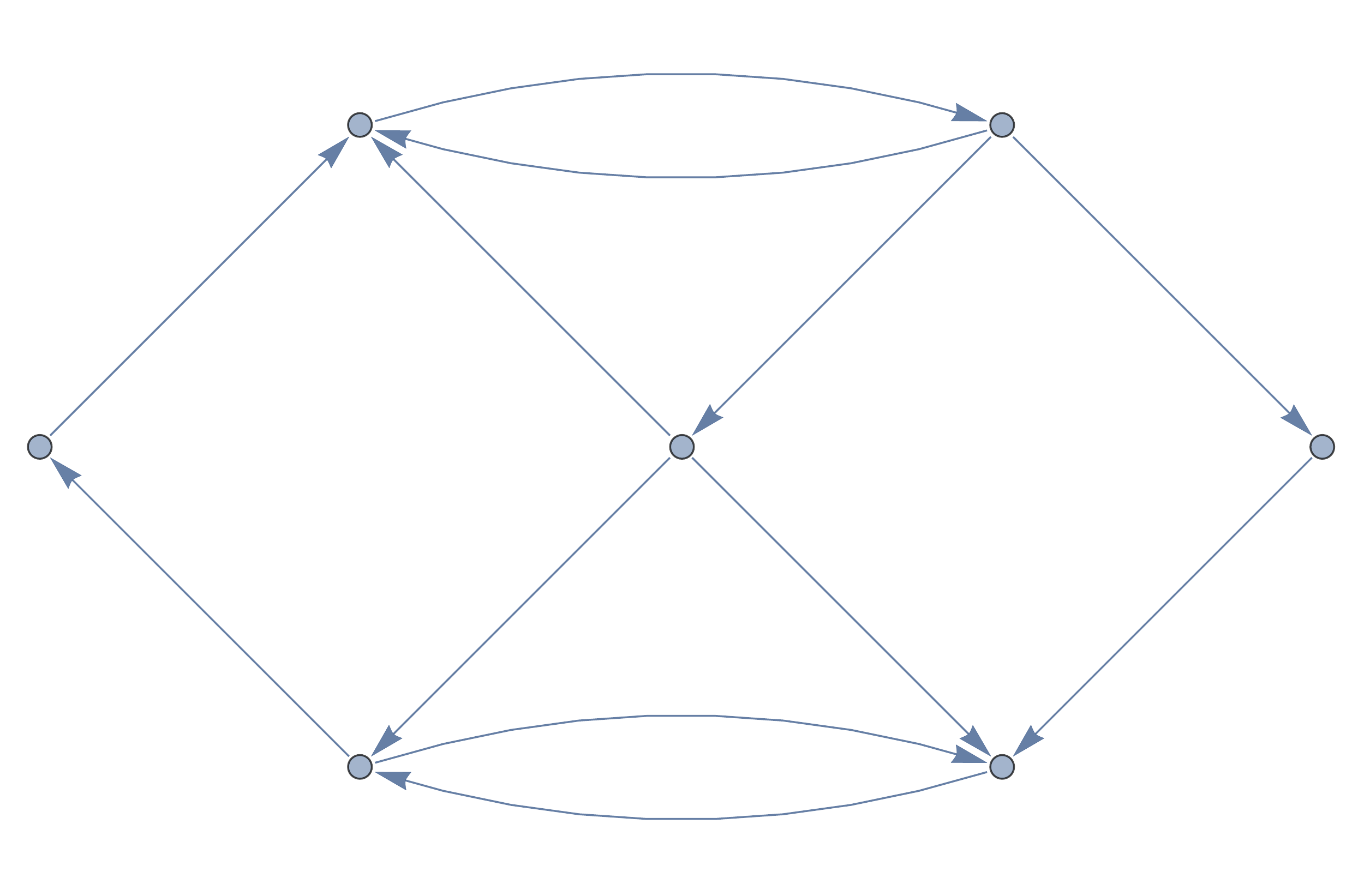}
    \put(47,-5){$G$}
    \put(-5,32){$v_1$}
    \put(22,60){$v_2$}
    \put(73,60){$v_3$}
    \put(99,32){$v_4$}
    \put(48,26.5){$v_5$}
    \put(73,3){$v_6$}
    \put(22,3){$v_7$}
    \end{overpic} &
    \hspace{0.35in}
    \begin{overpic}[scale=.175]{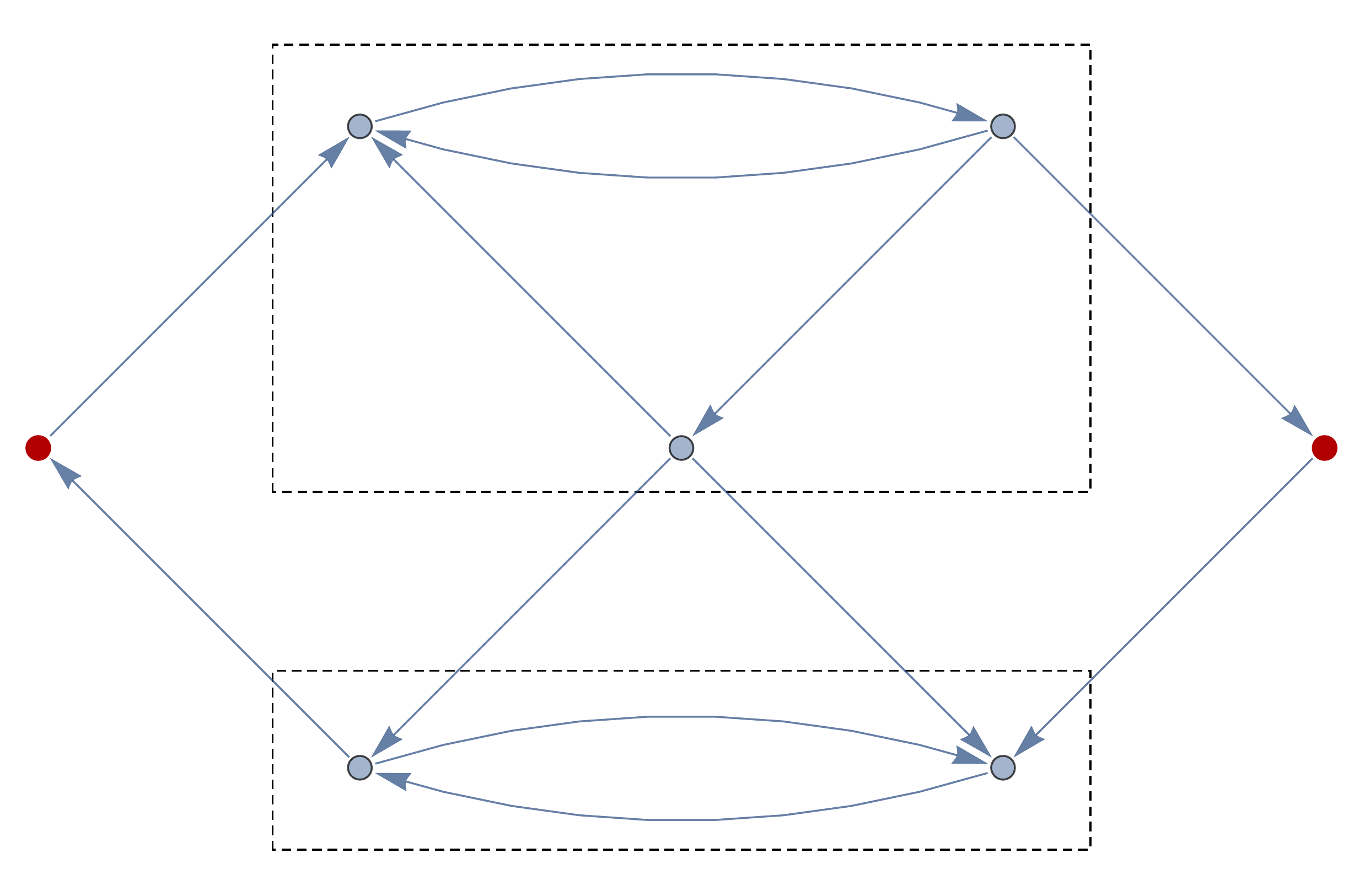}
    \put(46,-5){$G$}
    \put(-5,32){$v_1$}
    \put(99,32){$v_4$}
    \put(46,43){$C_1$}
    \put(46,7.25){$C_2$}
    \end{overpic}
\end{tabular}
\end{center}
  \caption{The unweighted graph $G=(V,E)$ is shown left. The components $C_1$ and $C_2$ of $G$ with respect to the vertex set $S=\{v_1,v_4\}$ are shown right. These components are the subgraphs $C_1=G|\{v_2,v_3,v_5\}$ and $C_2=G|\{v_6,v_7\}$, indicated by the dashed boxes, which are the strongly connected components of the graph $G|{\bar{S}}$.}\label{fig1}
\end{figure}

A decomposition of a graph into its component branches is given in the following example.

\begin{example}\label{ex:1} \textbf{(Branch Decomposition)}
For the unweighted graph $G=(V,E)$ shown in figure \ref{fig1} (left) let $S=\{v_1,v_4\}$. Then the graph $G|\bar{S}$ has the strongly connected components $C_1=G|\{v_2,v_3,v_5\}$ and $C_2=G|\{v_6,v_7\}$, which are indicated in figure \ref{fig1} (right). The set $\mathcal{B}_S(G)$ consists of the component branches
\begin{align*}
\beta_1&=v_1,e_{12},C_1,e_{34},v_4 \ \ \ \ \ \ \ \ \ \ \ \ \ \beta_2=v_4,e_{46},C_2,e_{71},v_1\\
\beta_3&=v_1,e_{12},C_1,e_{56},C_2,e_{71},v_1 \ \ \ \beta_4=v_1,e_{12},C_1,e_{57},C_2,e_{71},v_1;
\end{align*}
which are shown in figure \ref{fig2} (left).
\end{example}

It is worth emphasizing that each branch $\beta\in\mathcal{B}_S(G)$ is a subgraph of $G$. As a consequence, the edges of $\beta$ inherit the weights they had in $G$ if $G$ is weighted. If $G$ is unweighted then its component branches are likewise unweighted (cf. figure \ref{fig2}).

Once a graph has been broken into its various branches the idea is to use these branches to construct a new graph that has, at least locally, the same structure as the original graph. More precisely, this new graph will have the same set of components $C_1,\dots,C_m$ as the original graph but the connections between these components will be different. This evolved graph is formed by merging the branches $\mathcal{B}_S(G)$ of the original graph into a new larger graph as follows.

\begin{definition} \textbf{(Evolved Graphs)}\label{def:exp}
 Suppose $G=(V,E,\omega)$ and $S\subseteq V$. Let $\mathcal{X}_S(G)=(\mathcal{V},\mathcal{E},\mu)$ be the \emph{evolved graph} which consists of the component branches $$\mathcal{B}_{S}(G)=\{\beta_1,\dots,\beta_{\ell}\}$$
 in which we \emph{merge}, i.e. identify, each vertex $v\in S$ in any branch $\beta_i$ with the same vertex $v$ in any other branch $\beta_j$.
\end{definition}

Note that, in a component branch $\beta\in\mathcal{B}_S(G)$ only the first and last vertices of $\beta$ belong to the set $S$. The evolved graph $\mathcal{X}_S(G)$ is then the collection of branches $\mathcal{B}_S(G)$ in which we identify an endpoint of two branches if they are the same vertex. This is shown in the following example.

\begin{figure}
\begin{center}
\begin{tabular}{cc}
    \begin{overpic}[scale=.46]{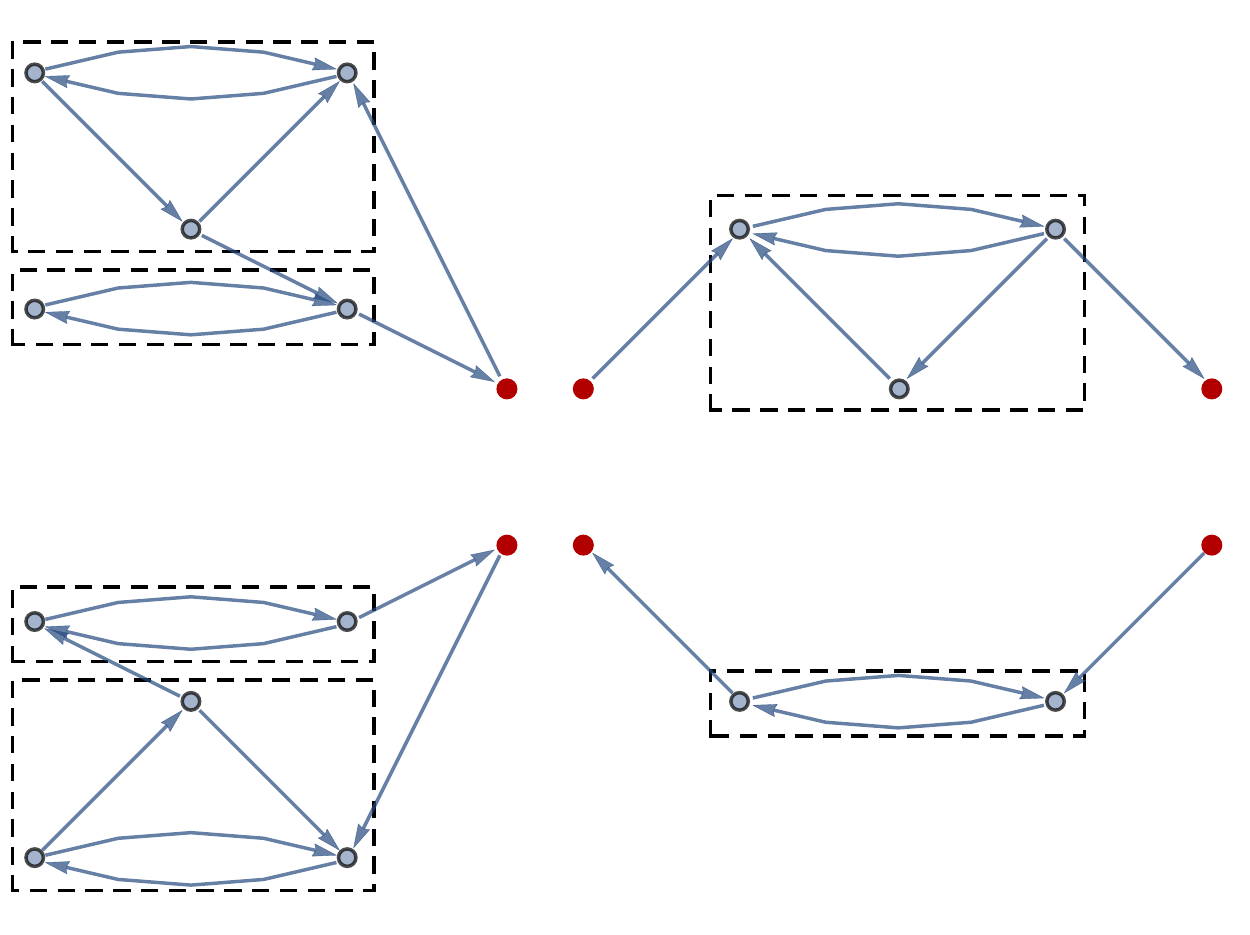}
    \put(49,0){$\mathcal{B}_S(G)$}
    \put(14,74){$\beta_4$}
    \put(14,-2){$\beta_3$}
    \put(70,10){$\beta_2$}
    \put(70,62){$\beta_1$}

    \put(94.5,39.5){$v_4$}
    \put(95,33){$v_4$}
    \put(38,39.5){$v_1$}
    \put(38,33){$v_1$}

    \put(45,39.5){$v_1$}
    \put(45,33){$v_1$}
    \end{overpic} &
    \hspace{0.1in}
    \begin{overpic}[scale=.44]{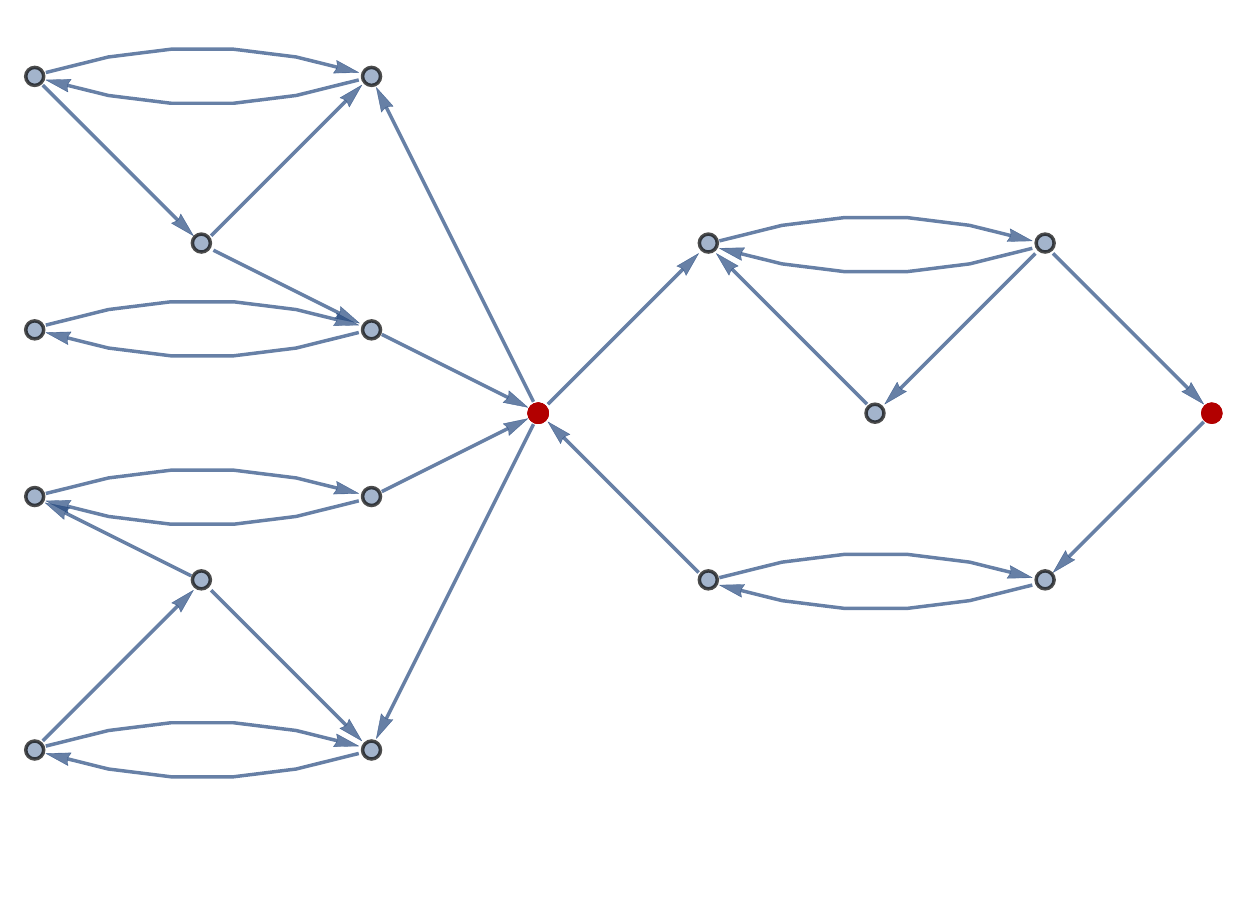}
    \put(45,0){$\mathcal{X}_S(G)$}

    \put(29,39){$v_1$}
    \put(100,39){$v_4$}
    \end{overpic}
\end{tabular}
\end{center}
  \caption{The component branches $\mathcal{B}_S(G)=\{\beta_1,\beta_2,\beta_3,\beta_4\}$ of the graph $G=(V,E)$ from figure \ref{fig1} over the vertex set $S=\{v_1,v_4\}$ are shown (left). The evolved graph $\mathcal{X}_S(G)$ is shown (right), which is made by merging each of the vertices $v_1$ and $v_4$ respectively in each of the branches of $\mathcal{B}_S(G)$. The edge labels and vertex labels are omitted, except for those vertices in $S$, to emphasize how these vertices are merged.}\label{fig2}
\end{figure}

\begin{example}\label{ex:2} \textbf{(Merging Component Branches)}
For the unweighted graph $G=(V,E)$ shown in figure \ref{fig1} (left) we again let $S=\{v_1,v_4\}$. The component branches $\mathcal{B}_S(G)=\{\beta_1,\beta_2,\beta_3,\beta_4\}$ are the branches shown in figure \ref{fig2} (left). By merging each of the vertices $v_1\in S$ over all branches in $\mathcal{B}_S(G)$ and doing the same for the vertex $v_4\in S$ the result is the graph $\mathcal{X}_S(G)=(\mathcal{V},\mathcal{E})$ shown in figure \ref{fig2} (right).
\end{example}

One can think of a graph evolution as a graph transformation that evolves the structure of the graph by maintaining the interactions between the vertices of $S$ and of $\bar{S}$ but breaking
up the interactions that pass from one set to the other. Because of this property, this transformation maintains the local structures, i.e. components of the graph, but reorganizes how these components interact.

In fact, the evolved graph will have more of these components than the original graph it is evolved from. In this sense one can think of the evolved graph as have a more modular structure where the network's modules or communities are formed by these components, which may be highly connected internally but are only minimally connected to the rest of the network (see \cite{Newman2006} for a survey of modularity). Moreover, because there are potentially many copies of the same component in the evolved graph the evolved graph has a certain amount of redundancy, which is a feature that is often observed in real networks \cite{MSA08}.

An important aspect of a graph evolution is that it preserves a graph's weight set. For instance, if $G$ has real, integer-valued, or positive weights then any one of its evolutions will have weights that are real, integer-valued, or positive, respectively. In fact, if $\mathcal{X}_S(G)=(\mathcal{V},\mathcal{E},\mu)$ is an evolution of $G=(V,E,\omega)$ then $\mu(\mathcal{E})=\omega(E)$ so that the edge weights of the evolved graph are collectively the same as the collective edge weights of the original unexpanded graph.

An example of an evolution of a graph with integer weights is shown in figure \ref{fig4} (left). Here the graph $H=(V,E,\omega)$ has the weight set $\omega(E)=\{1,2,3,4\}$. The evolution $\mathcal{X}_S(H)=(\mathcal{V},\mathcal{E},\mu)$ over $S=\{v_1,v_4\}$ is shown in figure \ref{fig4} (right) which, as can be seen, has the same weight set.

To understand how a change in a network's graph structure effects the network's dynamics and in turn the network's function, we need some notion that relates both structure and dynamics. One of the most useful concepts that does this is the notion of a network's spectrum. The spectrum of a network can be defined in a number of ways since there are a number of ways that a matrix can be associated with a network.

Matrices that are often associated with a network include various Laplacian matrices, e.g. the regular Laplacian, combinatorial Laplacian, normalized Laplacian, signless Laplacian, etc. Other matrices include the adjacency and weighted adjacency matrix of a graph, the distance matrix of a graph, etc.
The eigenvalues of these matrices are of interest for a number of reasons. For instance, the \emph{spectral gap} of the Laplacian matrix of a graph, which is its second smallest eigenvalue, determines a number of dynamic properties including synchronization thresholds and the rate of convergence to synchronization and consensus \cite{WM10} on certain networks. The spectral radius of a weighted adjacency matrix of a graph is related to the dynamic stability of the associated network \cite{BW12,BW13}.

\begin{figure}
\begin{center}
\begin{tabular}{c}
    \begin{overpic}[scale=0.43]{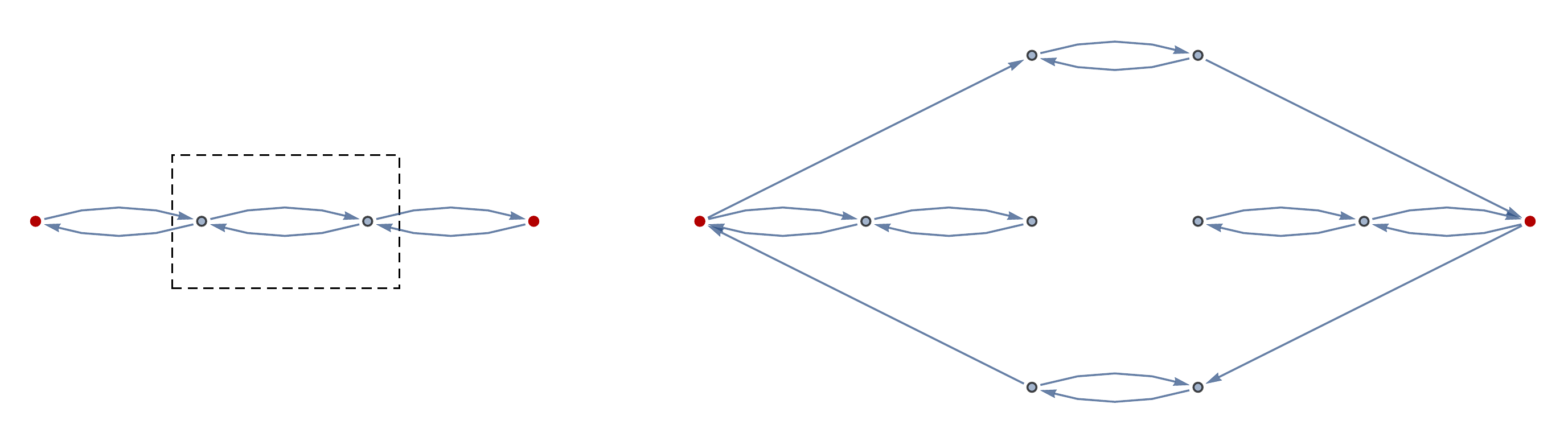}
    \put(17,4){$H$}
    \put(1.5,11){$v_1$}
    \put(11.5,11){$v_2$}
    \put(22,11){$v_3$}
    \put(32.5,11){$v_4$}

    \put(7,15.5){\tiny $1$}
    \put(17.75,15.5){\tiny $2$}
    \put(28.25,15.5){\tiny $3$}
    \put(7,11.75){\tiny $3$}
    \put(17.75,11.75){\tiny $2$}
    \put(28.25,11.75){\tiny $1$}
    \put(17,20){$C_1$}

    \put(41,11){$v_1 \hspace{-0.05in}=\hspace{-0.05in}w_1$}
    \put(54,12){$w_2$}
    \put(64.5,12){$w_3$}
    \put(64.5,22){$w_4$}
    \put(64.5,1){$w_5$}
    \put(75,22){$w_6$}
    \put(75,1){$w_7$}
    \put(75,12){$w_8$}
    \put(85.5,12){$w_9$}
    \put(93.5,11){$v_4 \hspace{-0.05in}= \hspace{-0.05in}w_{10}$}

    \put(49.5,15.5){\tiny $1$}
    \put(59.75,15.5){\tiny $2$}
    \put(49.5,11.75){\tiny $3$}
    \put(59.75,11.75){\tiny $2$}

    \put(81.5,15.5){\tiny $2$}
    \put(91.25,15.5){\tiny $3$}
    \put(81.5,11.75){\tiny $2$}
    \put(91.25,11.75){\tiny $1$}

    \put(70.5,26){\tiny $2$}
    \put(70.5,22.25){\tiny $2$}

    \put(70.5,5){\tiny $2$}
    \put(70.5,1.25){\tiny $2$}

    \put(55,20.5){\tiny $1$}
    \put(55,7){\tiny $3$}

    \put(86,20.5){\tiny $3$}
    \put(86,7){\tiny $1$}

    \put(67,-3){$\mathcal{X}_S(H)$}

    \end{overpic}
\end{tabular}
\end{center}
  \caption{The weighted graph $H$ shown left has the strongly connected component $C_1=H|\{v_2,v_3\}$ with respect to the vertex set $S=\{v_1,v_4\}$. The evolution $\mathcal{X}_S(H)$ is the weighted graph shown right, which has the same weight set as $H$.}\label{fig4}
\end{figure}

The type of matrix we consider in this paper is the weighted adjacency matrix of a graph. Given a graph $G=(V,E,\omega)$ its \emph{weighted adjacency matrix} $M=M(G)$ is the matrix
\[
M_{ij}=
\begin{cases}
\omega(e_{ij}) \ \ \text{if} \ e_{ij}\in E\\
0 \hspace{.8cm} \ \text{otherwise}.
\end{cases}
\]
If $G$ is unweighted then each entry $M(G)_{ij}$ is either zero or one. The \emph{eigenvalues} of the matrix $M(G)$ make up the graph's \emph{spectrum}, which we denote by
\[
\sigma(G)=\{\lambda\in\mathbb{C}:\det(M(G)-\lambda I)=0\}.
\]
In later sections, we will investigate the connection between the spectrum of a graph $G$ and the dynamics of the network associated with it. For now we simply assume that to each network there is an associated graph $G$ with adjacency matrix $M=M(G)$.

The reason we consider the adjacency matrix of a graph $G$ verses any one of the other matrices that can be associated with $G$ is that there is a one-to-one relationship between the matrices $M\in\mathbb{R}^{n\times n}$ and the weighted directed graphs we consider. Hence, we can talk about a unique graph associated with any square matrix with real entries.

Because we are concerned with the spectrum of a graph, which is a set that includes multiplicities, the following will be important for our discussion. First, the element $\alpha$ of the set $A$ that includes multiplicities has \emph{multiplicity} $m$ if there are $m$ elements of $A$ equal to $\alpha$. If $\alpha\in A$ with multiplicity $m$ and $\alpha\in B$ with multiplicity $n$ then\\
\indent (i) the \emph{union} $A\cup B$ is the set in which $\alpha$ has multiplicity $m+n$; and\\
\indent (ii) the \emph{difference} $A-B$ is the set in which $\alpha$ has multiplicity $m-n$ if $m-n>0$ and where $\alpha\notin A-B$ otherwise.

For ease of notation, if $A$ and $B$ are sets that include multiplicity then we let $B^k=\cup_{i=1}^kB$ for $k\geq 1$. That is, the set $B^k$ is $k$ copies of the set $B$ where we let $B^0=\emptyset$. For $k=-1$ we let $A\cup B^{-1}=A-B$. With this notation in place, the spectrum of a graph $G$ and the spectrum of $\mathcal{X}_S(G)$ are related by the following result.

\begin{theorem}\label{thm1} \textbf{(Spectra of Evolved Graphs)}
Let $G=(V,E,\omega)$, $S\subseteq V$, and let $C_1,\dots,C_m$ be the strongly connected components of $G|\bar{S}$. Then
\[
\sigma\big(\mathcal{X}_S(G)\big)=\sigma(G)\cup\sigma(C_1)^{n_1-1}\cup\sigma(C_2)^{n_2-1}\cup\dots\cup \sigma(C_m)^{n_m-1}
\]
where $n_i$ is the number of branches in $\mathcal{B}_S(G)$ that contain $C_i$.
\end{theorem}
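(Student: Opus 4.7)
The plan is to use the theory of isospectral reductions, as the appendix develops. For a weighted adjacency matrix $M$ and a vertex set $S$, the isospectral reduction over $S$ is the rational matrix
$$\mathcal{R}(M, S)(\lambda) = M_{SS} - M_{S\bar{S}}(M_{\bar{S}\bar{S}} - \lambda I)^{-1} M_{\bar{S}S},$$
and the Schur-complement identity
$$\det(M - \lambda I) = \det(M_{\bar{S}\bar{S}} - \lambda I)\cdot \det\bigl(\mathcal{R}(M,S)(\lambda) - \lambda I\bigr)$$
factors the characteristic polynomial of $M$ in terms of those of $M_{\bar{S}\bar{S}}$ and of the reduction. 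I would apply this identity to both $M(G)$ and $M(\mathcal{X}_S(G))$ using the same index set $S$, and then compare the three resulting factors.

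The two ``bar-$S$'' factors are easy to identify. Because the $C_i$ are the strongly connected components of $G|\bar{S}$, a vertex permutation puts $M(G|\bar{S})$ in block upper-triangular form with diagonal blocks $M(C_1),\dots,M(C_m)$, so $\det(M(G|\bar{S}) - \lambda I) = \prod_i \det(M(C_i) - \lambda I)$. By construction, $\mathcal{X}_S(G)|\bar{S}$ is obtained from $n_i$ disjoint copies of each $C_i$ joined only by the single inter-component edges internal to each branch; ordering the copies along each branch and the branches in any order makes its adjacency matrix block upper-triangular with diagonal blocks that repeat each $M(C_i)$ exactly $n_i$ times, so $\det(M(\mathcal{X}_S(G)|\bar{S}) - \lambda I) = \prod_i \det(M(C_i) - \lambda I)^{n_i}$.

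The heart of the argument is to prove
$$\mathcal{R}(G, S)(\lambda) = \mathcal{R}(\mathcal{X}_S(G), S)(\lambda).$$
I would do this entry by entry, by expanding the resolvent on $\bar{S}$ into a walk sum. For $v_i, v_j \in S$, the $(i,j)$-entry of the reduction equals $M_{ij}$ plus a sum over walks from $v_i$ to $v_j$ that leave $S$ on the first step and return only on the last step; grouping these walks by the ordered list of strongly connected components they visit, each group contributes
$$\omega(e_0)\,\omega(e_1)\cdots\omega(e_r)\, \prod_l \bigl[(\lambda I - M(C_{k_l}))^{-1}\bigr]_{u_l,u_l'}$$
summed over entry and exit vertices $u_l,u_l'$ in each component, which matches exactly the branches $\beta\in \mathcal{B}_S(G)$ of definition~\ref{def:componentbranch}. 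Since $\mathcal{X}_S(G)$ is by construction the merging of these same branches at their $S$-endpoints, the walk-sum computed in $\mathcal{X}_S(G)$ ranges over the identical multiset of branches, and each contributes the same rational function, so the two reductions coincide.

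Combining the two Schur identities and canceling the common reduction factor yields
$$\det(M(\mathcal{X}_S(G)) - \lambda I) = \det(M(G) - \lambda I)\cdot \prod_{i=1}^m \det(M(C_i) - \lambda I)^{n_i - 1},$$
from which the multiset spectral identity follows by reading off root multiplicities. The main obstacle is the branch-indexed walk decomposition of the reduction in the third step: in $G$ a single walk can re-enter a shared component along different branches, while in $\mathcal{X}_S(G)$ these traversals occur in disjoint copies, so one must verify that the two walk sums nevertheless produce the same rational function because each branch contributes the same local resolvent factor in both graphs. This is where the careful bookkeeping—identifying exactly which entries of each $(\lambda I - M(C_i))^{-1}$ appear—has to be done.
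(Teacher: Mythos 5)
Your proposal is correct and shares the paper's overall skeleton: both arguments reduce $M(G)$ and $M(\mathcal{X}_S(G))$ isospectrally over $S$, establish the central identity $\mathcal{R}_S(M(G))=\mathcal{R}_S(M(\mathcal{X}_S(G)))$, and then account for the spectra of the restrictions to $\bar{S}$ (your factorizations $\prod_i\det(M(C_i)-\lambda I)$ and $\prod_i\det(M(C_i)-\lambda I)^{n_i}$ are the paper's $\sigma(C_1)$ and $\sigma(C_1)^{n_1}$ in the general case). Where you genuinely differ is in how that central identity is proved. The paper treats a single component $C_1$ by explicit block algebra, writing $\hat{W}$ and $\hat{Y}$ as stacks of single-entry pieces with $\sum_i W_i=W$ and $\sum_j Y_j=Y$ so that $\hat{W}(\hat{Z}-\lambda I)^{-1}\hat{Y}=\sum_{i,j}W_i(Z-\lambda I)^{-1}Y_j=W(Z-\lambda I)^{-1}Y$, and then handles $m>1$ components by induction, evolving one component at a time. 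You instead expand the resolvent $(\lambda I-M_{\bar{S}\bar{S}})^{-1}$ as a walk sum over the DAG of components and match terms to branches; this treats all components in one pass, makes visible why branches are exactly the right combinatorial objects, and your determinant form of the Schur identity delivers the multiplicities directly, which is cleaner than the paper's closing eigenvalue-count. The paper's route buys a shorter algebraic verification at the price of the inductive bookkeeping $\mathcal{X}_{S_1-\bar{S}_2}(G)=\mathcal{X}_{S_1}(\mathcal{X}_{S_2}(\mathcal{X}_{S_1}(G)))$. One small correction: the ``main obstacle'' you flag, that a single walk might re-enter a shared component, cannot occur, since once a walk in $G|\bar{S}$ leaves a strongly connected component it cannot return (by maximality of the component); so the component sequence of a walk is automatically a path in the condensation and your grouping is well defined without further argument.
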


If a network has a changing graph structure that can be modeled via a graph evolution, or more naturally a sequence of evolutions, then theorem \ref{thm1} allows us to effectively track the changes in the network's spectrum. This will be used in section \ref{sec:4} to describe how a network can evolve structurally while maintaining a particular type of dynamics. Since a network's dynamics is related to its ability to perform specific tasks, this theory of graph evolutions is introduced as a tool to study the interplay of network growth and function.

Because the proof of theorem \ref{thm1} relies on the theory of isospectral graph reductions \cite{BW12,BWBook} we defer it until the Appendix, where the necessary parts of this theory are given. For now, we consider an example of theorem \ref{thm1}. In section \ref{sec3} we consider some consequences and applications of theorem \ref{thm1} to the study of network structure and growth.

\begin{example}\label{ex:3}
The weighted graph $H=(V,E,\omega)$ in figure \ref{fig4} has eigenvalues
\[\sigma(H)=\{\pm 1,\pm 3\}.\]
For $S=\{v_1,v_4\}$ the graph $H|\bar{S}$ has the strongly connected components $C_1$ with eigenvalues $\sigma(C_1)=\{\pm 2\}$. Since four branches of $\mathcal{B}_S(G)$ contain $C_1$, theorem \ref{thm1} implies that
\[
\sigma(\mathcal{X}_S(H))=\sigma(H)\cup\sigma(C_1)^3=\{\pm 1,\pm 2,\pm 2,\pm 2,\pm 3\}.
\]
\end{example}

Not only are the eigenvalues of a graph $G$ preserved as the graph is evolved but also the eigenvectors of $G$ are also preserved to a certain extent. An \emph{eigenvector} $\mathbf{v}$ of a graph $G$ corresponding to the eigenvalue $\lambda$ is a vector such that $M(G)\mathbf{v}=\lambda\mathbf{v}$. If this is the case then we call $(\lambda, \mathbf{v})$ an \emph{eigenpair} of $G$. Moreover, if $S$ is a subset of the vertices of $G$ then we let $\mathbf{v}_S$ denote the eigenvector $\mathbf{v}$ restricted to those entries indexed by $S$.

\begin{prop}\label{prop:0}\textbf{(Eigenvectors of Evolved Graphs)}
Let $(\lambda,\mathbf{v})$ be an eigenpair of the graph $G=(V,E,\omega)$. If $S\subset V$ and $\lambda\notin\sigma(G|\bar{S})$ then there is an eigenpair $(\lambda,\mathbf{w})$ of the expanded graph $\mathcal{X}_S(G)$ such that $\mathbf{w}_S=\mathbf{v}_S$.
\end{prop}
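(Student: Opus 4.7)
The plan is to reduce the eigenvalue equation to the isospectral reduction used in the proof of Theorem \ref{thm1}. First I would write $A = M(G)$ in block form relative to $(S,\bar S)$,
\[
A = \begin{pmatrix} A_{SS} & A_{S\bar S} \\ A_{\bar S S} & A_{\bar S \bar S} \end{pmatrix}, \qquad \mathbf{v} = \begin{pmatrix}\mathbf{v}_S \\ \mathbf{v}_{\bar S}\end{pmatrix}.
\]
Because $\lambda \notin \sigma(G|\bar S)$, the matrix $\lambda I - A_{\bar S \bar S}$ is invertible, so the lower block of $A\mathbf{v}=\lambda\mathbf{v}$ forces $\mathbf{v}_{\bar S} = (\lambda I - A_{\bar S \bar S})^{-1}A_{\bar S S}\mathbf{v}_S$, and substituting into the upper block shows that $\mathbf{v}_S$ satisfies the reduced eigenvalue equation
\[
\mathcal{R}_S(G;\lambda)\,\mathbf{v}_S = \lambda\,\mathbf{v}_S, \qquad \mathcal{R}_S(G;\lambda) := A_{SS} + A_{S\bar S}(\lambda I - A_{\bar S \bar S})^{-1}A_{\bar S S}. \quad(\ast)
\]

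Next I would apply the same Schur decomposition to $B := M(\mathcal{X}_S(G))$, blocked along $(S,\bar{\mathcal V})$ with $\bar{\mathcal V} := \mathcal V \setminus S$. Two facts from the evolution construction are needed. First, every strongly connected component of $\mathcal{X}_S(G)|\bar{\mathcal V}$ is a copy of some $C_i$, so
\[
\sigma\bigl(\mathcal{X}_S(G)|\bar{\mathcal V}\bigr) \subseteq \bigcup_{i=1}^m \sigma(C_i) = \sigma(G|\bar S),
\]
and in particular $\lambda \notin \sigma(B_{\bar{\mathcal V}\bar{\mathcal V}})$. Second, and crucially, the isospectral reduction over $S$ is invariant under the evolution:
\[
\mathcal{R}_S\bigl(\mathcal{X}_S(G);\lambda\bigr) = \mathcal{R}_S(G;\lambda).
\]
This is the structural identity that drives Theorem \ref{thm1}; combinatorially it says that each branch $\beta\in\mathcal{B}_S(G)$ contributes exactly one summand to the walk/Schur-complement expansion of $A_{S\bar S}(\lambda I - A_{\bar S\bar S})^{-1}A_{\bar S S}$, and the gluing-at-$S$ construction reproduces the same sum in $\mathcal{X}_S(G)$, with $B_{SS}=A_{SS}$ covering the length-zero (direct) branches.

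Given these two ingredients the eigenvector is assembled directly. Set $\mathbf{w}_S := \mathbf{v}_S$ and $\mathbf{w}_{\bar{\mathcal V}} := (\lambda I - B_{\bar{\mathcal V}\bar{\mathcal V}})^{-1}B_{\bar{\mathcal V} S}\mathbf{v}_S$, which is well-defined by the first fact above. The lower block of $B\mathbf{w} = \lambda\mathbf{w}$ then holds by construction, while the upper block becomes
\[
B_{SS}\mathbf{v}_S + B_{S\bar{\mathcal V}}(\lambda I - B_{\bar{\mathcal V}\bar{\mathcal V}})^{-1}B_{\bar{\mathcal V} S}\mathbf{v}_S = \mathcal{R}_S\bigl(\mathcal{X}_S(G);\lambda\bigr)\mathbf{v}_S = \mathcal{R}_S(G;\lambda)\mathbf{v}_S = \lambda\mathbf{v}_S,
\]
by the invariance identity and $(\ast)$. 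Thus $(\lambda,\mathbf{w})$ is an eigenpair of $\mathcal{X}_S(G)$ with $\mathbf{w}_S=\mathbf{v}_S$.

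The main obstacle is the invariance identity $\mathcal{R}_S(\mathcal{X}_S(G);\lambda) = \mathcal{R}_S(G;\lambda)$; everything else is routine Schur-complement manipulation. This identity is not established in the excerpt, but it is precisely the invariance principle of isospectral reductions recalled in the appendix to prove Theorem \ref{thm1}, so it is legitimate to invoke here. A direct combinatorial proof would require matching, branch by branch, the transfer-matrix walk expansions in $G|\bar S$ and $\mathcal{X}_S(G)|\bar{\mathcal V}$, and the careful bookkeeping for components appearing in multiple branches is the step where the argument needs its only real input.
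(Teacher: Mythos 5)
Your proposal is correct and follows essentially the same route as the paper's proof, which combines the Schur-complement eigenvector correspondence (Theorem \ref{thm:reduction} and the reversal noted in its last line) with the identity $\mathcal{R}_S(M(\mathcal{X}_S(G)))=\mathcal{R}_S(M(G))$ established in the proof of Theorem \ref{thm1}. The only difference is that you explicitly verify $\lambda\notin\sigma(B_{\bar{\mathcal V}\bar{\mathcal V}})$ via the component structure of $\mathcal{X}_S(G)|\bar{\mathcal V}$, a point the paper leaves implicit but which is needed for the lifted eigenvector to be well defined.
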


As an example illustrating how restricted eigenvectors are maintained as a graph is evolved, we again consider the graph $H$ and its expansion $\mathcal{X}_S(H)$ shown in figure \ref{fig4} where $S=\{v_1,v_4\}$. Letting the eigenpairs of $H$ be given by $(\lambda_i,\mathbf{v}^i)$ and the corresponding eigenpairs of $\mathcal{X}_S(G)$ by $(\lambda_i,\mathbf{w}^i)$ for $i=1,2,3,4$ we find that
\begin{align*}
\lambda_1=3:& \ \mathbf{v}^1=[5 \ \ 15 \ \ 15 \ \ 5]^T, \ \ \ \ \ \ \mathbf{w}^1=[5 \ \ 9 \ \ 6 \ \ 6 \ \ 9 \ \ 9 \ \ 6 \ \ 6 \ \ 9 \ \ 5]^T\\
\lambda_2=1:& \ \mathbf{v}^2=[-1 \ -1 \ \ 1 \ \ 1]^T, \ \ \ \ \mathbf{w}^2=[-1 \ \ 1 \ \ 2 \ -2 \ \ 1 \ -1 \ \ 2 \ -2 \ -1 \ \ 1]^T\\
\lambda_3=-1:& \ \mathbf{v}^3=[1 \ -1 \ -1 \ \ 1]^T, \ \ \ \ \mathbf{w}^3=[1 \ \ 1 \ -2 \ -2 \ \ 1 \ \ 1 \ -2 \ -2 \ \ 1 \ \ 1]^T\\
\lambda_4=-3:& \ \mathbf{v}^4=[-5 \ \ 15 \ -15 \ \ 5]^T, \ \mathbf{w}^4=[-5 \ \ 9 \ -6 \ \ 6 \ \ 9 \ -9 \ -6 \ \ 6 \ -9 \ \ 5]^T.
\end{align*}
Note that $\mathbf{v}^i_S=\mathbf{w}^i_S$ for each $i=1,2,3,4$. That is, $\mathbf{v}^1_S=\mathbf{w}^1_S=[5 \ \ 5]^T$, $\mathbf{v}^2_S=\mathbf{w}^2_S=[-1 \ \ 1]^T$, $\mathbf{v}^3_S=\mathbf{w}^3_S=[-1 \ \ 1]^T$, and $\mathbf{v}^4_S=\mathbf{w}^4_S=[-5 \ \ 5]^T$.

Proposition \ref{prop:0} states that the graphs $G$ and $\mathcal{X}_S(G)$ have the same eigenvectors if we restrict our attention to those entries that correspond to $S$ and to those eigenvectors with corresponding eigenvalues in $\sigma(G)-\sigma(G|\bar{S})\subset\sigma(\mathcal{X}_S(G))$. One consequence of this fact is that the eigenvector centrality of the vertices in $S$ remain the same as the graph is expanded.

By the Perron-Frobenius theorem, if $G=(V,E)$ is strongly connected then $G$ has a unique eigenvalue $\rho$ which is the \emph{spectral radius} of $G$, i.e. $\rho=\max\{|\lambda|:\lambda\in\sigma(G)\}$. Moreover, $\rho$ is a simple eigenvalue and the eigenvector $\mathbf{p}$ associated with $\rho$ has nonnegative entries. The vector $\mathbf{p}$, which is unique up to a constant, gives the relative score $p_i$ to each vertex $v_i\in V$. This value $p_i$ is referred to as the \emph{eigenvector centrality} of the vertex $v_i$ \cite{Newman??}. Here we refer to the vector $\mathbf{p}$ as an \emph{eigencentrality vector} of the graph $G$.

The reason the eigencentrality vector $\mathbf{p}$ gives each vertex of $V$ a \emph{relative} score is that for any $c>0$, $c\mathbf{p}$ is also an eigenvector associated with $\rho$. That is, the eigenvector centrality of a vertex is not fixed. However, any vector $\mathbf{p}$ associated with $\rho$ induces the same \emph{ranking} on the vertices of $G$ where $v_i$ is ranked above $v_j$ if $p_i>p_j$, below $v_j$ if $p_i<p_j$, and the same as $p_j$ if $p_i=p_j$. A graph evolution of $G$ preserves its vertices' eigenvector centrality in the following way.

\begin{theorem}\label{prop10}\textbf{(Eigenvector Centrality of Evolved Graphs)}
Let $G=(V,E)$ be strongly connected with eigencentrality vector $\mathbf{p}$. If $S\subset V$ then $\mathcal{X}_S(G)$ has an eigencentrality vector $\mathbf{q}$ where $\mathbf{p}_S=\mathbf{q}_S$. Hence, the eigenvalue centrality of the vertices in $S$ is preserved as the graph is evolved.
\end{theorem}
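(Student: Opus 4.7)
The plan is to apply Proposition \ref{prop:0} with $\lambda = \rho$, the spectral radius of $G$, and then upgrade the eigenvector it produces to an eigencentrality vector of $\mathcal{X}_S(G)$. Since $G$ is strongly connected, the Perron--Frobenius theorem guarantees that $\rho$ is a simple eigenvalue of $M(G)$ with a nonnegative associated eigenvector $\mathbf{p}$, which by hypothesis is the given eigencentrality vector. The target conclusion is therefore an eigenpair $(\rho,\mathbf{q})$ of $\mathcal{X}_S(G)$ with $\mathbf{q}_S = \mathbf{p}_S$ in which $\mathbf{q}$ is nonnegative and corresponds to the spectral radius of $\mathcal{X}_S(G)$.

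First I would verify the hypothesis $\rho \notin \sigma(G|\bar{S})$ required by Proposition \ref{prop:0}. After relabeling vertices by strongly connected component, $M(G|\bar{S})$ is block triangular with diagonal blocks $M(C_1),\dots,M(C_m)$, so $\sigma(G|\bar{S}) = \sigma(C_1)\cup\dots\cup\sigma(C_m)$. Each $C_i$ is a proper principal submatrix of the irreducible nonnegative matrix $M(G)$, and a standard strict-monotonicity consequence of Perron--Frobenius for irreducible nonnegative matrices then gives $\rho(C_i) < \rho$. Hence every element of $\sigma(C_i)$ has modulus strictly less than $\rho$, and in particular $\rho \notin \sigma(G|\bar{S})$.

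Proposition \ref{prop:0} now delivers an eigenpair $(\rho,\mathbf{w})$ of $\mathcal{X}_S(G)$ with $\mathbf{w}_S = \mathbf{p}_S$. To finish I would use Theorem \ref{thm1}, which gives
\[
\sigma(\mathcal{X}_S(G)) = \sigma(G) \cup \sigma(C_1)^{n_1-1} \cup \dots \cup \sigma(C_m)^{n_m-1},
\]
to conclude two things: (i) the spectral radius of $\mathcal{X}_S(G)$ is $\rho$, since every $\lambda$ in any $\sigma(C_i)$ satisfies $|\lambda| \leq \rho(C_i) < \rho$ while $\rho \in \sigma(G)$; and (ii) $\rho$ appears with multiplicity one in $\sigma(\mathcal{X}_S(G))$, being simple in $\sigma(G)$ and absent from each $\sigma(C_i)$. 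The matrix $M(\mathcal{X}_S(G))$ is nonnegative (the evolution preserves the weight set), so Perron--Frobenius provides a nonnegative eigenvector for its spectral radius $\rho$; simplicity forces $\mathbf{w}$ to be a scalar multiple of it. Because $\mathbf{w}_S = \mathbf{p}_S$ is nonzero and nonnegative, that scalar is positive, so $\mathbf{w}$ itself is a nonnegative eigenvector. Setting $\mathbf{q} = \mathbf{w}$ completes the argument.

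The step I expect to be the main obstacle is establishing $\rho \notin \sigma(G|\bar{S})$, which is what unlocks Proposition \ref{prop:0}; this rests crucially on the strict inequality $\rho(C_i) < \rho$ and hence on the irreducibility of $M(G)$, i.e.\ the strong connectedness of $G$. Once that is in place, the remainder is a direct assembly of Proposition \ref{prop:0}, Theorem \ref{thm1}, and the Perron--Frobenius theorem, with no additional graph-theoretic construction required.
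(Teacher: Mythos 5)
Your proposal is correct and follows essentially the same route as the paper: combine Proposition \ref{prop:0}, Theorem \ref{thm1}, and the Perron--Frobenius theorem to transfer the Perron eigenpair from $G$ to $\mathcal{X}_S(G)$. In fact you are more explicit than the paper on two points it leaves implicit --- verifying the hypothesis $\rho\notin\sigma(G|\bar{S})$ via the strict inequality $\rho(C_i)<\rho$ for proper principal submatrices of an irreducible nonnegative matrix, and arguing that the eigenvector produced is nonnegative --- whereas the paper instead notes directly that $\mathcal{X}_S(G)$ is strongly connected because evolution preserves path structure.
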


The graph $H$ and its evolution $\mathcal{X}_S(G)$ in figure \ref{fig4} have eigencentrality vectors
\[
\mathbf{v}^1=[5 \ \ 15 \ \ 15 \ \ 5]^T \ \text{and} \ \mathbf{w}^1=[5 \ \ 9 \ \ 6 \ \ 6 \ \ 9 \ \ 9 \ \ 6 \ \ 6 \ \ 9 \ \ 5]^T
\]
respectively, which correspond to the eigenvalue $\rho=3$. Since $H$ is strongly connected it then follows from theorem \ref{prop10} that $\mathbf{v}^1_S=\mathbf{w}^1_S=[5 \ \ 5]$. That is, the vertices $v_1$ and $v_4$ of $S$ have the same eigenvalue centrality in $H$ and the same eigenvalue centrality in $\mathcal{X}_S(H)$. This illustrates how the eigenvector centrality of the core vertices in a graph remain the same relative to each other as the graph is evolved.

As with the proof of theorem \ref{thm1} the proofs of proposition \ref{prop:0} and theorem \ref{prop10} are given in the Appendix as they relies on the theory of isospectral graph reductions.

\section{Structural Evolution Rules and Equivalence Classes}\label{sec3}

The notion of a graph evolution is extremely versatile in the sense that a graph $G=(V,E,\omega)$ can be evolved with respect to any subset of its vertex set. Hence, if $|V|=n$ then there are $2^n$ potential evolutions of $G$. The main idea presented in this section is that it is possible to focus on exactly one of these evolutions if we choose a particular rule $\tau$ for selecting vertices from a graph.

An important consequence of choosing such a rule $\tau$ is that it will give us a way of comparing the topologies of two distinct networks. In particular, any such rule will allow us to partition the set of networks we consider into different equivalence classes, i.e. separate any set networks we are working with into those networks that are similar and dissimilar with respect to the rule $\tau$.

The idea is that it is possible to partition the graphs we consider with respect to any rule that selects a specific set of vertices from any given graph. We refer to $\tau$ as a \emph{structural rule} if it selects a unique subset of vertices from any graph $G$. For instance, $\tau$ could be the rule that selects all vertices with loops or all the vertices that have a certain centrality, etc. (cf. examples \ref{Zachary} and \ref{ex:evoequ}). Not every rule that can be devised will select a unique vertex set of a graph. The simplest example would be the rule that randomly selects a single vertex of a graph. This selection is, of course, nonunique.

For a rule $\tau$ and a graph $G=(V,E,\omega)$, we let $\tau(V)\subseteq V$ denote the set of vertices this rule selects. For simplicity, we denote the evolution of $G$ and the component branches with respect to $\tau$ as
\[
\tau(G)=\mathcal{X}_{\tau(V)}(G) \ \ \text{and} \ \ \mathcal{B}_{\tau}(G)=\mathcal{B}_{\tau(V)}(G),
\]
respectively. It is worth emphasizing that any graph can be \emph{evolved} with respect to any rule $\tau$ that selects a unique subset of its vertex set.

In the following example we consider two different rules and investigate to what extent these rules lead to different evolutions of the same network.

\begin{figure}
\begin{center}
\begin{tabular}{cc}
    \begin{overpic}[scale=.65]{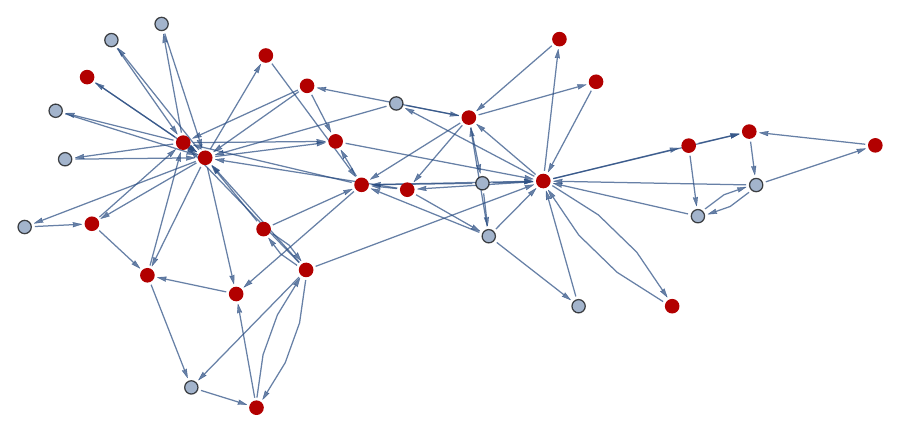}
    \put(38,0){\small $ZK_{EC}$}
    \end{overpic} &
    \begin{overpic}[scale=.65]{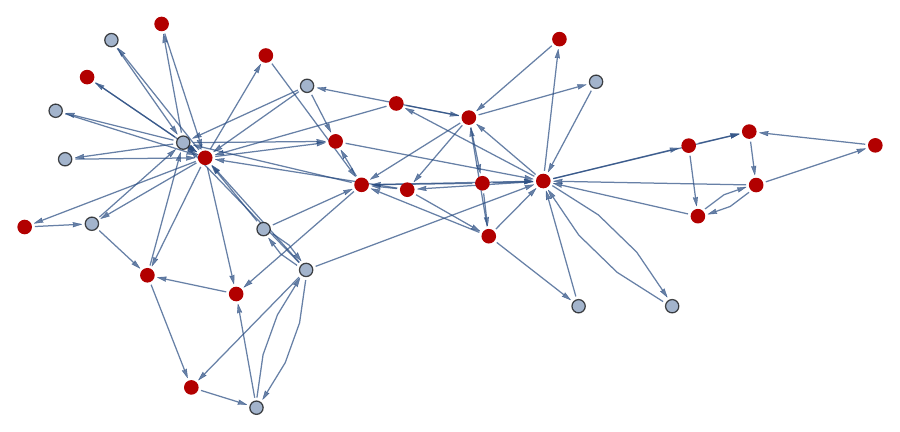}
    \put(38,0){\small $ZK_{BC}$}
    \end{overpic}\\\\
    \begin{overpic}[scale=.65]{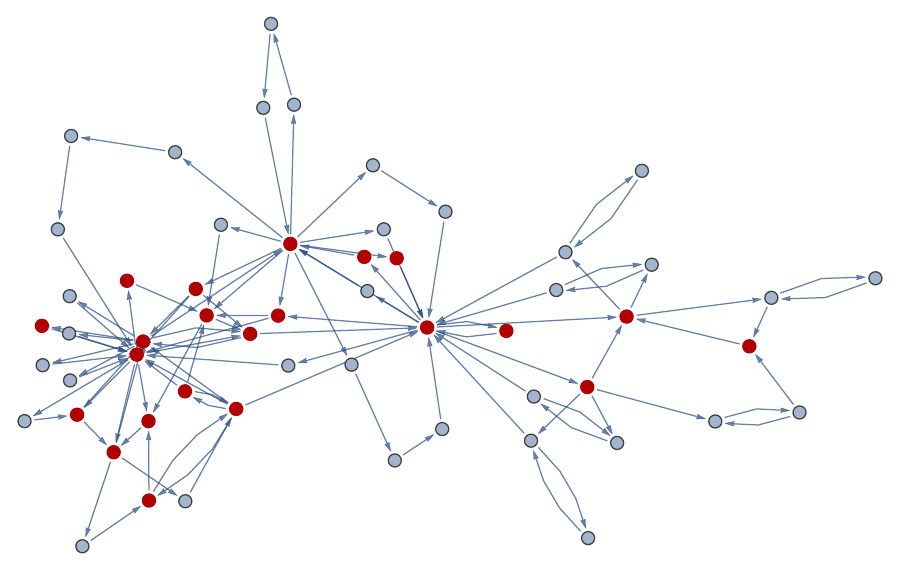}
    \put(31,0){\small $\tau_{EC}(ZK_{EC})$}
    \end{overpic} &
    \begin{overpic}[scale=.575]{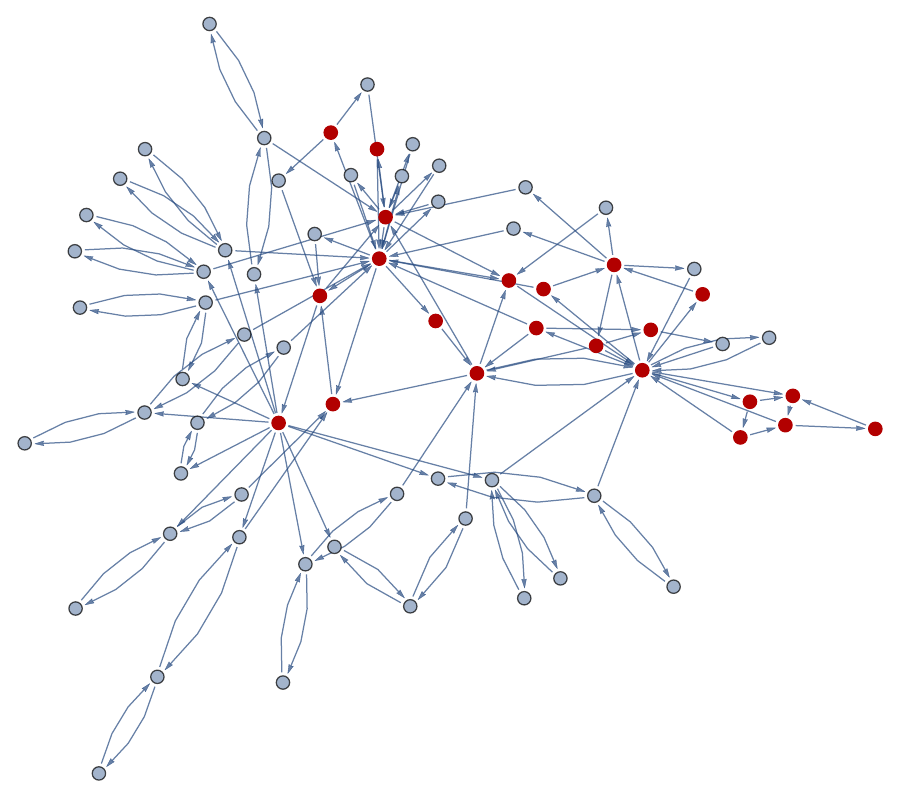}
    \put(31,0){\small $\tau_{BC}(ZK_{BC})$}
    \end{overpic}
\end{tabular}
\end{center}

\vspace{0.1in}

\caption{The graphs $ZK_{EC}$ and $ZK_{BC}$ shown above are identical and are directed versions of the Zachary Karate Club graph. In $ZK_{EC}$ and $ZK_{BC}$ two-thirds of the vertices with the largest eigenvector centrality and betweenness centrality are highlighted, respectively. The same vertices are highlighted in the evolved graphs $\tau_{EC}(ZK_{EC})$ and $\tau_{BC}(ZK_{BC})$ of the directed karate club graph over these distinct vertex sets, respectively.}\label{fig:Zach}
\end{figure}

\begin{example}\label{Zachary}\textbf{(Zachary Karate Club)}
We consider a variant of the Zachary Karate Club graph. The Zachary Karate Club graph itself is an undirected graph on thirty-four vertices with seventy-eight edges. The graph describes the social network consisting of thirty-four members of a karate club where the graph's edges represent those members who interact outside the club \cite{Z77}.

Here, for the sake of illustration, we consider a directed version of this graph in which the edges of this original karate club graph are oriented and a number of these directed edges are randomly removed. Since this graph is directed we note that one can interpret this orienting of edges as changing the social interactions in the club from ``friends" to ``followers". The idea is that \emph{friends} have a joint interaction in which there is no distinction in direction. For a \emph{follower}, one person follows another, which is a directed and often non-reciprocated interaction.

The directed version of the Zachary Karate Club graph we consider is shown in figure \ref{fig:Zach} at the top left and right. Here, $ZK=ZK_{EC}=ZK_{BC}$. The difference between $ZK_{EC}$ and $ZK_{BC}$ are the vertices that are highlighted in each. In $ZK_{EC}$ the vertices are highlighted using the rule $\tau_{EC}$ that selects the top two-thirds of the club's members with largest \emph{eigenvector centrality}. The vertices highlighted in $ZK_{BC}$ are those selected by the rule $\tau_{BC}$ that selects the top two-thirds of the club's members with largest \emph{betweenness centrality}.

Here the vertex sets $\tau_{EC}(V)$ and $\tau_{BC}(V)$ are distinct. Consequently, each rule results in a different graph evolution. These different evolutions of $ZK$ are shown in figure \ref{fig:Zach} at the bottom left and right, respectively. One of the most noticeable difference between the two graphs are their sizes. In particular, $|\tau_{EC}(ZK_{EC})|=57$ whereas $|\tau_{BC}(ZK_{})|=72$. On the other hand, both graphs have nearly the same spectrum. Using theorem \ref{thm1} one can quickly compute that
\begin{align*}
\sigma(\tau_{EC}(ZK))&=\sigma(ZK)\cup\{\pm1\}^5\cup\{0\}^{13}\\
\sigma(\tau_{BC}(ZK))&=\sigma(ZK)\cup\{\pm\sqrt{2}\}^{11}\cup\{0\}^{29}.
\end{align*}

In terms of network growth, one can interpret the highlighted club members in $ZK_{EC}$ and $ZK_{BC}$ as the \emph{core} members of the group respectively, who are looking to recruit new members while simultaneously maintaining existing relationships. The network grows in that the core members find new members through the other non-core members of the group. One way to interpret this is to assume that these non-core members are likely to have contacts outside the group who, when they are introduced to the group, end up meeting those people who are contacts of the person who introduced to the club in the first place. If this is the case, the resulting growth of the club can be modeled via a graph evolution.

In this example the core members are determined by either their eigenvector or betweenness centrality, i.e. the rules $\tau_{EC}$ and $\tau_{BC}$, but these are only two of the $2^{34}$ possible ways of modeling the growth of this network. A more accurate model of growth would likely involve designing a rule $\tau$ for selecting core members that incorporates information such as the club's historic growth, city demographics, etc.
\end{example}

In general, a rule for network growth is likely to be highly dependent on the specific network under consideration. The idea presented here is that, if a core set of elements can be identified then the network's growth could be modeled using a graph evolution.

As previously mentioned, another potential use for this theory of network evolution is to introduce a new way of determining whether two networks are similar. In order to determine whether two graph are similar with respect to some rule $\tau$ we need the notion of a graph isomorphism, which can be defined as follows.

Two weighted digraphs $G_1=(V_1,E_1,\omega_1)$ and $G_2=(V_2,E_2,\omega_2)$ are \emph{isomorphic} if there is a bijection $\phi:V_1\rightarrow V_2$ such that there is an edge $e_{ij}$ in $G_1$ from $v_i$ to $v_j$ if and only if there is an edge $\tilde{e}_{ij}$ between $\phi(v_i)$ and $\phi(v_j)$ in $G_2$ with $\omega_2(\tilde{e}_{ij})=\omega_1(e_{ij})$. If the map $\phi$ exists it is called an \emph{isomorphism} and we write $G_1\simeq G_2$. Note that since any weighted/unweighted directed/undirected graph can be considered to be a weighted directed graph, this definition of isomorphic applies to all such graphs.

The equivalent notion for matrices is that the matrix $A\in\mathbb{R}^{n\times n}$ is \emph{similar} to the matrix $B\in\mathbb{R}^{n\times n}$ by some permutation matrix $P$, i.e. $A=PBP^{-1}$. Thus, if two graphs are isomorphic then their spectra are identical. This notion of being isomorphic, together with the fact that a structural rule $\tau$ generates the unique graph $\tau(G)$, allow us to define the following equivalence relation. The idea here is that two graph are similar with respect to a rule $\tau$ if they both evolve to the \emph{same}, i.e. isomorphic graph under this rule.

\begin{theorem}\textbf{(Evolution Equivalence)}\label{thm2}
Suppose $\tau$ is a structural rule. Then $\tau$ induces an equivalence relation $\sim$ on the set of all weighted directed graphs where $G\sim H$ if
\[
\tau(G)\simeq\tau(H).
\]
When this holds, we call $G$ and $H$ \emph{evolution equivalent} with respect to $\tau$.
\end{theorem}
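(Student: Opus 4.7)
The plan is to verify directly that the relation $\sim$ defined by $G\sim H\iff \tau(G)\simeq\tau(H)$ satisfies the three axioms of an equivalence relation (reflexivity, symmetry, transitivity). The key observation is that, because $\tau$ is a structural rule, the set $\tau(V)\subseteq V$ is uniquely determined by $G$, and so by Definition \ref{def:exp} the evolved graph $\tau(G)=\mathcal{X}_{\tau(V)}(G)$ is itself a well-defined weighted directed graph (unique up to the labelling of its non-core vertices). With $\tau(G)$ well-defined, the three required properties for $\sim$ will then be inherited from the corresponding properties of graph isomorphism $\simeq$ applied to the evolved graphs.

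First I would verify reflexivity: for any graph $G$, the identity map on the vertex set of $\tau(G)$ is a weight-preserving bijection, so $\tau(G)\simeq\tau(G)$ and hence $G\sim G$. Next, for symmetry, if $G\sim H$ then there is an isomorphism $\phi:\tau(G)\to\tau(H)$; since $\phi$ is a bijection preserving edges and weights, its inverse $\phi^{-1}:\tau(H)\to\tau(G)$ is also such a bijection, yielding $\tau(H)\simeq\tau(G)$ and hence $H\sim G$. Finally, for transitivity, if $G\sim H$ and $H\sim K$ with isomorphisms $\phi:\tau(G)\to\tau(H)$ and $\psi:\tau(H)\to\tau(K)$, then $\psi\circ\phi$ is a weight-preserving bijection from the vertices of $\tau(G)$ to those of $\tau(K)$ whose edge-preservation follows from the edge-preservation of $\phi$ and $\psi$ individually. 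Thus $\tau(G)\simeq\tau(K)$ and $G\sim K$.

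The only conceptual subtlety, and the place where the hypothesis ``structural rule'' is used nontrivially, is in asserting that $\tau(G)$ is well-defined as the single graph $\mathcal{X}_{\tau(V)}(G)$. If $\tau$ were allowed to return different vertex subsets on different presentations of the ``same'' graph (e.g.\ a random selection), then one could have $\tau(G)\not\simeq\tau(G)$ as $\tau$ applied to distinct runs might produce non-isomorphic outputs, and reflexivity could fail. The assumption that $\tau$ picks out a unique vertex set from any graph is precisely what eliminates this pathology. This is really the only ``step'' in the argument that requires care; the rest is an essentially formal transport, along the construction $G\mapsto\tau(G)$, of the fact that graph isomorphism is itself an equivalence relation on the space of weighted directed graphs.

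Given how short the verification is, I would present it as a brief three-part check, making explicit the appeal to the structural-rule hypothesis at the outset and then running through reflexivity, symmetry, and transitivity in sequence. No appeal to the spectral results (Theorem \ref{thm1}) or eigenvector results (Proposition \ref{prop:0}, Theorem \ref{prop10}) is needed — those are consequences of evolution equivalence that become relevant only once the equivalence relation is in hand.
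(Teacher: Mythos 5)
Your proposal is correct and follows essentially the same route as the paper's proof: first use the structural-rule hypothesis to conclude that $\tau(G)=\mathcal{X}_{\tau(V)}(G)$ is uniquely determined, then transport reflexivity, symmetry, and transitivity from the graph-isomorphism relation $\simeq$ on the evolved graphs. Your version merely spells out the inverse and composition of isomorphisms more explicitly than the paper does.
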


\begin{proof}
For any $G=(V,E,\omega)$ the set $\tau(V)\subseteq V$ is unique implying the graph $\tau(G)=\mathcal{X}_{\tau(V)}(G)$ is uniquely determined by the rule $\tau$. Clearly, the relation of being evolution equivalent with respect to $\tau$ is reflexive and symmetric, i.e. $\tau(G)\simeq\tau(H)$ and $\tau(G)\simeq\tau(H)$ imply $\tau(G)\simeq\tau(G)$ and $\tau(H)\simeq\tau(G)$, respectively. Also, if $\tau(G)\simeq\tau(H)$ and $\tau(H)\simeq\tau(K)$ then $\tau(G)\simeq\tau(K)$ completing the proof.

\end{proof}

Theorem \ref{thm2} states that every structural rule $\tau$ can be used to divide the set of graphs we consider, and by association all networks, into subsets. These subsets, or more formally \emph{equivalence classes}, are those graphs that share a common topology with respect to $\tau$. By \emph{common topology} we mean that graphs in the same class have the same set of component branches and therefore evolve into the same graph under $\tau$.

One reason for studying these equivalence classes is that it may not be obvious and most often is not that two different graphs belong to the same class. That is, two graphs may be structurally similar but until both graphs are evolved this may be difficult to see. One of the ideas we introduce here is that by choosing an appropriate rule $\tau$ one can discover this similarity. This is demonstrated in the following example.

\begin{figure}
\begin{center}
\begin{tabular}{ccc}
    \begin{overpic}[scale=.33,angle=-180.5]{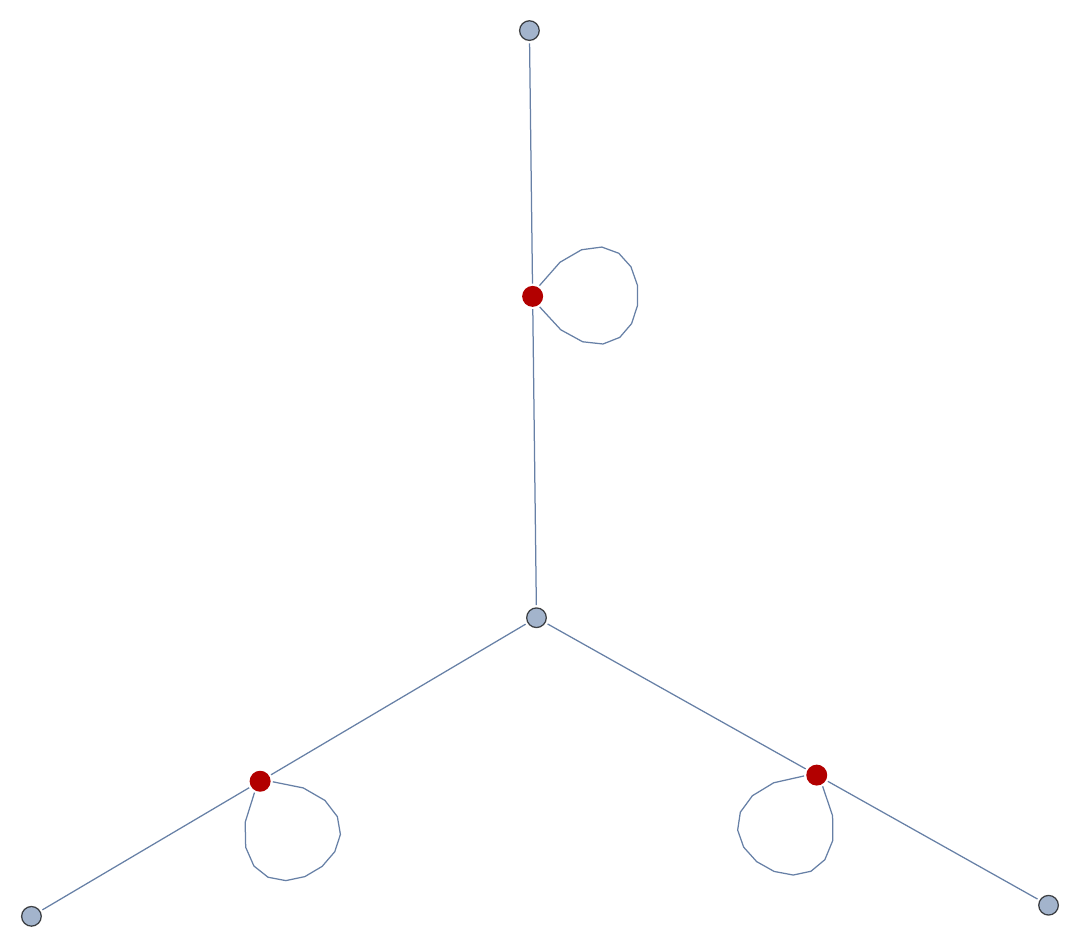}
    \put(38,-5){\put(10,-2){$G$}}
    \end{overpic} &
    \begin{overpic}[scale=.42,angle=181]{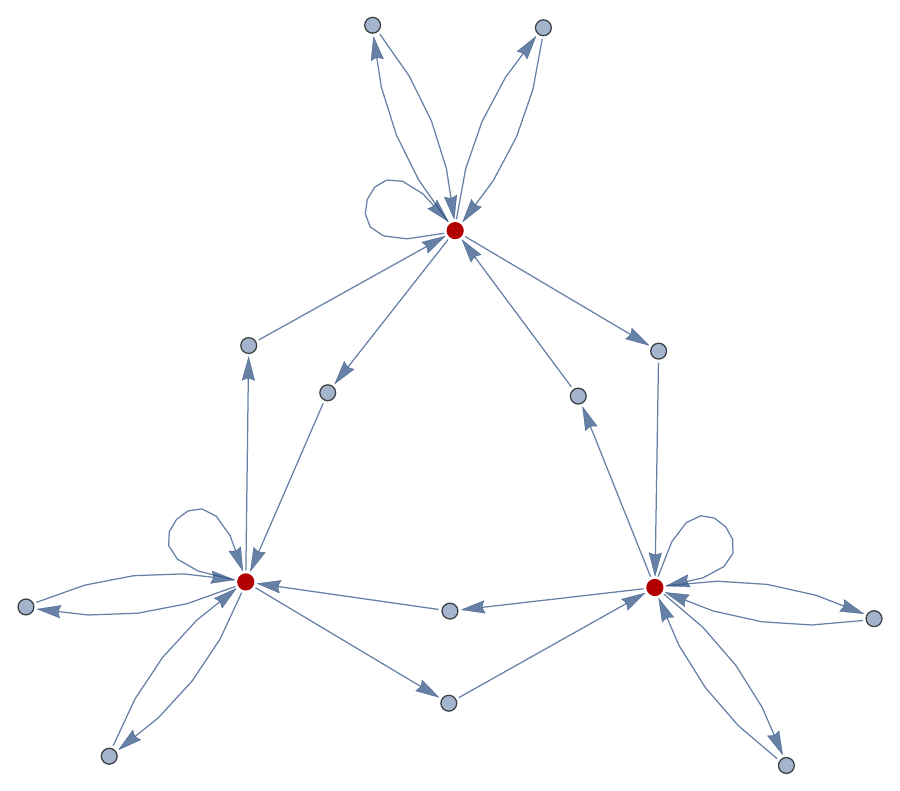}
    \put(25,-7){$\ell(G)\simeq\ell(H)$}
    \end{overpic} &
    \begin{overpic}[scale=.26,,angle=91]{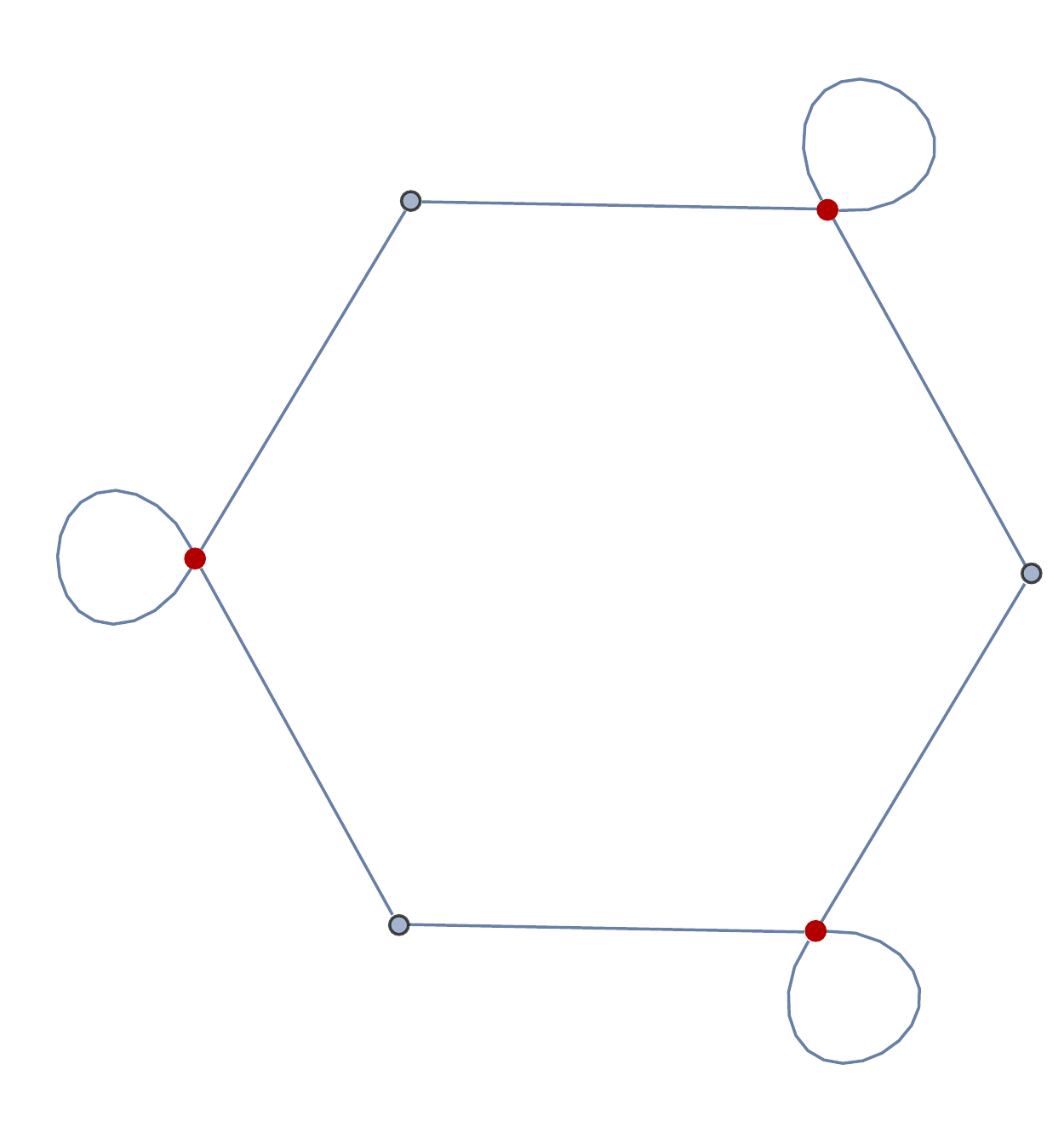}
    \put(46,-7){$H$}
    \end{overpic}
\end{tabular}
\end{center}
  \caption{The graph $G$ and the graph $H$ are evolution equivalent with respect to the rule $\ell$ that selects those vertices of a graph that have loops. That is, the graphs $\ell(G)$ and $\ell(H)$ are isomorphic as is shown.}\label{fig6}
\end{figure}

\begin{example}\label{ex:evoequ} \textbf{(Evolution Equivalent Graphs)}
Consider the unweighted undirected graphs $G=(V_1,E_1)$ and $H=(V_2,E_2)$ shown in figure \ref{fig6}. Without their loops, $G$ is an extended 3-star graph and $H$ is the 6-cycle graph. Here, we let $\ell$ be the rule that selects all vertices of a graph that have loops. We let $S_1=\ell(V_1)$ and $S_2=\ell(V_2)$ be these vertices of $G$ and $H$ selected by $\ell$ respectively, which are the vertices highlighted in figure \ref{fig6} in $G$ and $H$, respectively.

Although $G$ and $H$ appear to be quite different, the graphs $\ell(G)$ and $\ell(H)$ are isomorphic as is shown in figure \ref{fig6} (center). Hence, $G$ and $H$ belong to the same family of graphs with respect to the structural rule $\ell$. Moreover, based on theorem \ref{thm1} and the fact that the strongly connected components of both $G|\bar{S}_1$ and $H|\bar{S}_2$ consist of a collection of disconnected vertices then
\[
\sigma(\ell(G))=\sigma(G)\cup\{0\}^8 \ \ \text{and} \ \ \sigma(\ell(H))=\sigma(H)\cup\{0\}^9.
\]
Hence, $G$ and $H$ have the same nonzero spectrum.

It is worth noting that two graphs can be equivalent under one rule but not another. For instance, if $w$ is the structural rule that selects vertices without loops then $w(G)\not\simeq w(H)$ although $\ell(G)\simeq \ell(H)$.
\end{example}

From a practical point of view, a rule $\tau$ allows those studying a particular class of networks a way of comparing the \emph{evolved topology} of these networks and drawing conclusions about both the evolved and original networks. Of course, the rule $\tau$ should be designed by the particular biologist, chemist, physicist, etc. to have some significance with respect to the networks under consideration.

In fact, one can drop the notion of a fixed rule completely and evolve different networks over different subsets that are deemed important to the respective biologist, chemist, or physicist. Although having no fixed rule means that we no longer have the equivalence relation guaranteed by theorem \ref{thm2}, it is still possible and potentially much more useful to compare the evolved topology of different networks which have been evolved in different ways. The point is, many networks currently under study are likely to have similar features that come to light as these networks topology is evolved. A simple example is the following.

\begin{example}\label{ex:semiequ} \textbf{(Nonequivalent Graphs)}
Consider the graphs $K$ and $L$ shown in figure \ref{fig:nonequiv}. These graphs are locally indistinguishable from one another in that they have the same degree distributions. To determine to what extent $K$ and $L$ have a similar global structure with respect to a certain type of growth we need a rule that describes the core vertices of a network that the graph can evolve around.

For example, letting $\eta$ be the rule that selects the vertices of the graph of smallest degree we find that the branch sets $\mathcal{B}_{\eta}(K)$ and $\mathcal{B}_{\eta}(L)$ are nearly identical (see figure \ref{fig:nonequiv} bottom, left and right). The difference is that the components $C_1$ and $C_2$ in these branches are not identical. Hence, $K$ and $L$ are not evolution equivalent under $\eta$. However, given that $C_1$ and $C_2$ are fairly similar, it is not surprising that the spectra
\begin{align*}
\sigma(G)&=\{2.813,-1.342,0.529,-1,-1,-0.414,-0.347,-2\}\\
\sigma(H)&=\{\pm 2.813,\pm 1.342,\pm 0.529,\pm 1\}
\end{align*}
share many of the same values.
\end{example}

In general there are many ways of comparing how similar the two branch sets $\mathcal{B}_{\tau}(G)$ and $\mathcal{B}_{\tau}(H)$ are. For instance, one could simply compare the number of branches or compare the branches without their weights. Any such method would be a way of measuring how similar any two graphs are with respect to $\tau$.

\begin{figure}
\begin{center}
\begin{tabular}{c}
    \begin{overpic}[scale=.35]{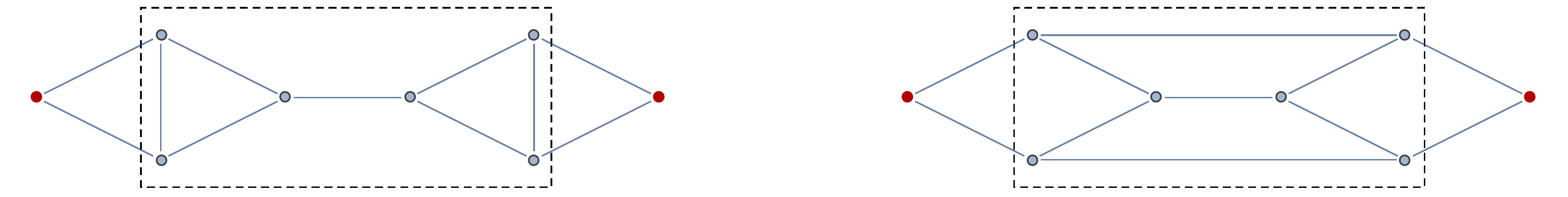}
    \put(21,-3){$K$}
    \put(76,-3){$L$}
    \put(21,7){$C_1$}
    \put(76,7){$C_2$}
    \end{overpic}\\\\\\
    \begin{overpic}[scale=.5]{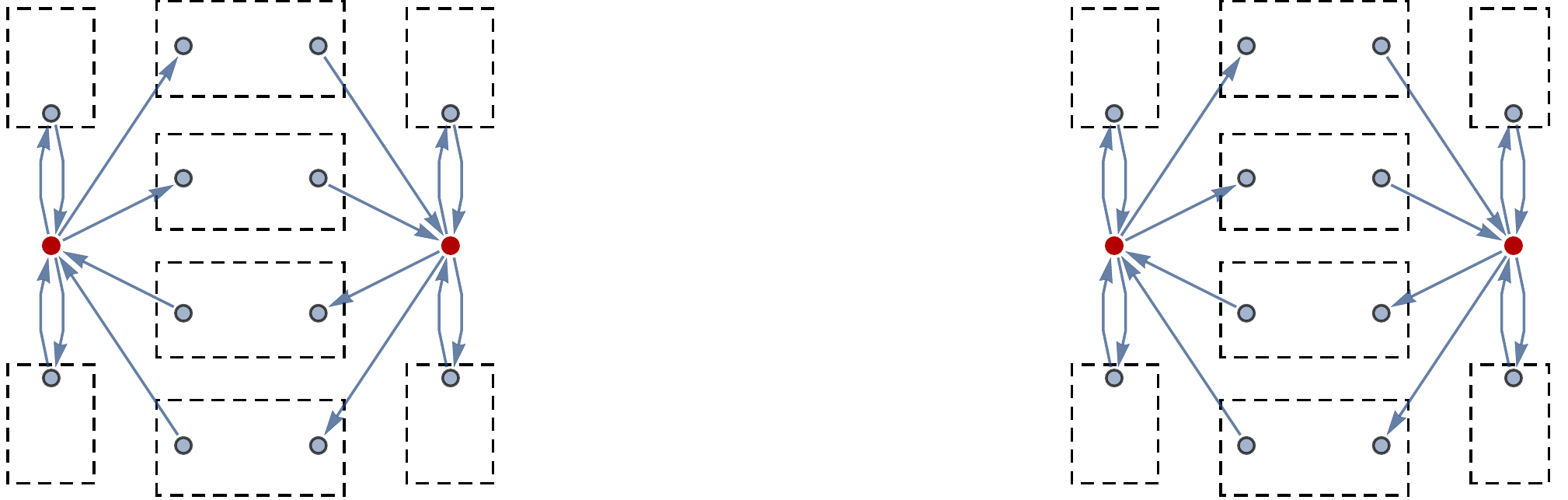}
    \put(1.5,3.5){$C_1$}
    \put(1.5,27){$C_1$}
    \put(14,2){$C_1$}
    \put(14,11){$C_1$}
    \put(14,19){$C_1$}
    \put(14,27.5){$C_1$}
    \put(26.75,3.5){$C_1$}
    \put(26.75,27){$C_1$}
    \put(12,-3.5){$\mathcal{X}_\eta(K)$}

    \put(69.5,3.5){$C_2$}
    \put(69.5,27){$C_2$}
    \put(82,2){$C_2$}
    \put(82,11){$C_2$}
    \put(82,19){$C_2$}
    \put(82,27.5){$C_2$}
    \put(94.75,3.5){$C_2$}
    \put(94.75,27){$C_2$}
    \put(80,-3.5){$\mathcal{X}_\eta(L)$}
    \end{overpic}
\end{tabular}
\end{center}

\vspace{0.1in}

\caption{The graphs $K$ and $L$ (above) have a similar structure of branches (below) with respect to the rule $\eta$ that selects those vertices of a graph with smallest degree. The difference is the component $C_1$ of $K$ is not the same as the component $C_2$ of $L$, so that $G$ and $H$ are not evolution equivalent with respect to $\eta$.}\label{fig:nonequiv}
\end{figure}

Not only can a rule $\tau$ be used to discover the similarities between two graphs or networks but this rule can be used to \emph{sequentially evolve} the topology of the graph. Inductively, we let $\tau^n(G)$ denote the evolution of $\tau^{n-1}(G)$ with respect to $\tau$, so that $\tau^n(G)=\tau(\tau^{n-1}(G))$ where $\tau^0(G)=G$. The result is the sequence of graphs
\[
G, \ \tau(G), \ \tau^2(G), \ \tau^3(G),\dots
\]
generated from the graph $G$ by the rule $\tau$.

The connection to real-world networks is the idea that, under certain conditions the network $G$ will evolve over time according to some fixed rule $\tau$. This notion of network evolution is illustrated in the following example.

\begin{example}\label{ex0} \textbf{(Sequential Graph Evolutions)}
Suppose $G=(V,E,\omega)$. For $v\in V$ let $d_{in}(v)$ be the \emph{in-degree} of $v$, which is the number of incoming edges incident to $v$, excluding loops in $G$. In an undirected graph the in-degree of a vertex $v$ is the same as the number of non-loop edges incident to $v$. Here we let $\delta$ be the rule
\begin{equation}\label{eq:deg}
\delta(V):=\{v\in V: d_{in}(v)=1\}.
\end{equation}
Observe that for any graph $G$ the set $\delta(V)$ both exists and is unique. Here, the vertex set $\delta(V)\subseteq V$ of the graph $G=(V,E)$ in figure \ref{fig:exfractal} (left) is highlighted. The result of sequentially expanding $G$ using $\delta$ is shown for the first few iterates of $\delta$ in figure \ref{fig:exfractal}.

It is worth mentioning that it is possible to show that
\[
\sigma(\delta^k(G))=\sigma(G)\cup\{0\}^{n_k} \ \ \text{for each} \ \ k\geq 0,
\]
where $n_k=2^4(2^k-1)$. We note that although the nonzero spectrum of the graph does not change as the graph evolves, the topology of the graph $\tau^n(G)$ becomes increasingly complicated as $n\rightarrow\infty$.
\end{example}

Example \ref{ex0} raises a few natural questions regarding the sequential evolution of a graph, or more generally, the sequential evolution of different classes of graphs over certain types of rules. One is, given a graph $G$ and a rule $\tau$, what is the spectrum of $\sigma(\tau^n(G))$ as $n$ increases. Another related question is, what is the topology of the graph $\tau^n(G)$ as $n$ increases. That is, what is the graph's \emph{asymptotic spectrum} and \emph{topology} under $\tau$?

Since the notion of a graph evolution is a new direction in the study of network dynamics, these are currently open questions. For the second question regarding the graph's asymptotic structure, we note that the graph $G$ in figure \ref{fig:exfractal} experiences exponential growth as it evolves. In contrast, the graphs $G$ and $H$ in figure \ref{fig6} stop evolving after one iterate as $\ell^2(G)=\ell(G)$ and $\ell^2(H)=\ell(H)$. This suggests that the evolution of a graph's topology strongly depends on the particular graph and structural rule $\tau$ used to evolve the its topology.

One can also generalize the notion of sequentially evolving a graph $G$ over a single rule to $\tau$ to sequentially evolving $G$ over the sequence of rules $\tau_1$, $\tau_2$, $\tau_3,\dots$ where each $\tau_i$ is a structural rule. The result is the sequence of graphs
\[
G, \ \tau_1(G), \ \tau_2(\tau_1(G)), \ \tau_3(\tau_2(\tau_1(G))),\dots
\]
That is, a network's evolution may more naturally be modeled not by a single rule but some number of rules. For instance, a graph may evolve under a rule $\tau_1$ until it reaches a particular size and from this point evolve under some other rule $\tau_2$ because of restrictions imposed by its environment, age, etc. In this more generally setting, one can similarly investigate the graph's asymptotic spectrum and structure as it evolves under this sequence of rules.

The notion of sequentially evolving a graph also allows us to extend the notion of evolution equivalence given in theorem \ref{thm2}. For a given structural rule $\tau$ we say that the graphs $G$ and $H$ are \emph{weakly evolution equivalent} under $\tau$ if there are numbers $p$ and $q$ such that $\tau^p(G)\simeq\tau^q(H)$.

\begin{figure}
\begin{center}
\begin{tabular}{c}
    \begin{overpic}[scale=.28]{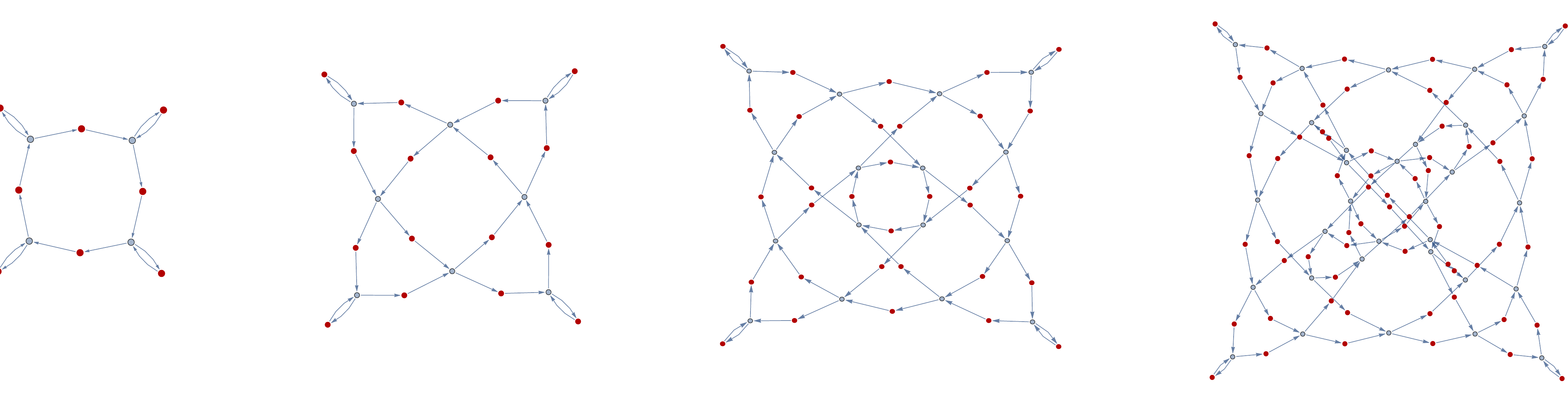}
    \put(4,-2){\small $G$}
    \put(26,-2){\small $\delta(G)$}
    \put(53,-2){\small $\delta^2(G)$}
    \put(85,-2){\small $\delta^3(G)$}
    \end{overpic}
\end{tabular}
\end{center}

\vspace{0.1in}

\caption{The graph $G$ is sequentially evolved using the rule $\delta$ that selects all vertices with in-degree one. The vertices satisfying this property are highlighted red in each iterate of G.}\label{fig:exfractal}
\end{figure}

Similar to the proof of theorem \ref{thm2}, one can show that any structural rule $\tau$ induces an equivalence relation $\sim$ on the set of all weighted directed graphs where $G\sim H$ if $G$ and $H$ are weakly evolution equivalent. Hence, every structural rule $\tau$ can be used to partition the space of graphs we consider into subsets that are weakly evolution equivalent under $\tau$.

It is worth noting that if $p=q=1$ where $\tau^p(G)\simeq\tau^q(H)$ then $G$ and $H$ are not only weakly evolution equivalent but also \emph{evolution equivalent}. Since any two graphs $G$ and $H$ that are evolution equivalent under $\tau$ are also weakly evolution equivalent under $\tau$, the evolution equivalent graphs form a subset of those graphs that are weakly evolution equivalent. Hence, the partition induced by those graphs that are evolution equivalent is a \emph{refinement} of the partition induced by those graphs that are only weakly evolution equivalent under $\tau$.

Before ending this section we note that the theory of graph evolutions presented here applies equally well to matrices. The reason is that given a matrix $A\in\mathbb{R}^{n\times n}$ there is a unique graph $G$ for which $M(G)=A$. Hence, for a given evolution rule $\tau$ we can define $$\tau(A)=M(\tau(G)) \ \ \text{where} \ \ A=M(G)$$
to be the \emph{matrix evolution} of $A$ with respect to $\tau$.

This notion of a matrix evolution will be used in the following section to describe how a network can evolve structurally while maintaining its dynamics.

\section{Network Growth and Stability}\label{sec:4}

In sections \ref{sec2} and \ref{sec3} of this paper we have been primarily concerned with the evolution of a network's topology. This evolution determines to a certain extent the network's function and how well the network preforms this function. However, the network's performance also depends on the type of dynamics that emerges from the interactions between the network elements.

One of the more complicated processes to model is the growth of a network that needs to maintain a specific type of dynamics. Some of the most natural examples come from the biological sciences. As previously mentioned, the network of cells in a beating heart attempts to maintain this function even as the network grows. Similarly, in the technological sciences electrical grids are designed to carry power to the consumer even as new lines, plants, etc. are added to the grid.

In general, the \emph{dynamics} of a network with a fixed structure of interactions can be modeled by iterating a map $F:X\rightarrow X$ on a product space $X=\bigoplus_{i\in N} X_{i}$ where $N=\{1,\dots,n\}$ and each local phase space $(X_i,d)$ is a metric space. Here the dynamics of the $i$th network element is given by the $i$th component of $F$:
\[
F_i:\bigoplus_{j\in I_i} X_j\rightarrow X_i, \hspace{0.01in} I_i\subseteq N,
\]
where the set $I_i$ indexes the elements that \emph{interact} with the $i$th element. We refer to the system $(F,X)$ as a \emph{dynamical network}.

The dynamics of the network $(F,X)$ is generated by iterating the function $F$ such that if $\mathbf{x}(k)\in X$ is the state of the network at time $k\geq 0$ then $\mathbf{x}(k+1)=F(\mathbf{x}(k))$ is the state of the network at time $k+1$. The $i$th component $x_i(k)$ represents the state of the $i$th network element at time $k+1$.

The specific type of dynamics we consider here is global stability, which is observed in a number of important systems including neural networks \cite{Cao2003,Cheng2006,SChena2009,MCohen1983,LTao2011}, in epidemic models \cite{Wang2008}, and is also important in the study of congestion in computer networks \cite{Alpcan2005}. In a \emph{globally stable} network, which we will simply refer to as \emph{stable}, the state of the network tends towards equilibrium irrespective of its present state. That is, there is an $\bar{\mathbf{x}}\in X$ such that for any $\mathbf{x}(0)\in X$, $\mathbf{x}(k)\rightarrow\bar{\mathbf{x}}$ as $k\rightarrow\infty$.

This globally attracting equilibrium is typically a state in which the network can carry out a specific task. Whether or not this equilibrium stays stable depends on a number of factors including external influences such as changes in the environment the network is in. Not only can outside influences destabilize a network but potentially the network's own growth. For instance, in a biological networks cancer is the abnormal growth of cells that can lead to network failure. Here we propose a general mechanism describing how a network can evolve structurally while remaining stable.

For simplicity in our discussion we assume that the map $F:X\rightarrow X$ is differentiable and that each $X_i$ is some closed interval of real numbers. Under this assumption we define the following matrix which can be used to investigate the stability of a given dynamical network.

\begin{definition} \textbf{(Stability Matrix)}
For the dynamical network $(F,X)$ suppose there exist finite constants
\begin{equation}\label{eq:stability}
\Lambda_{ij}=\sup_{\mathbf{x}\in X}\Big|\frac{\partial F_i}{\partial x_j}(\mathbf{x})\Big| \ \text{for} \ 1\leq i,j\leq n.
\end{equation}
Then we call the matrix $\Lambda\in\mathbb{R}^{n\times n}$ the \emph{stability matrix} of $(F,X)$.
\end{definition}

The stability matrix $\Lambda$ can be thought of as a global linearization of the typically nonlinear dynamical network $(F,X)$. In fact, the matrix $\Lambda$ can be used to describe the topology of $(F,X)$ in that the graph we associate with $(F,X)$ is the graph $G$ where $M(G)=\Lambda$.

The original motivation for defining $\Lambda$ is that if the dynamical network $(\Lambda, X)$ given by $\mathbf{x}(k+1)=\Lambda\mathbf{x}(k)$ is stable then the same is true of the original network $(F,X)$. To make this precise we let $\rho(\Lambda)$ denote the \emph{spectral radius} of $\Lambda$, i.e.
\[
\rho(\Lambda)=max_i\{|\lambda_i| : \lambda_i\in\sigma(\Lambda)\}.
\]
The fact that stability of the linearized network $(\Lambda,X)$ implies stability of the original network $(F,X)$ is summarized in the following result, the proof of which can be found in \cite{BW12}.

\begin{theorem}\label{stability} \textbf{(Network Stability)}
Suppose $\Lambda$ is the stability matrix of the dynamical network $(F,X)$. If $\rho(\Lambda)<1$ then the dynamical network $(F,X)$ is stable.
\end{theorem}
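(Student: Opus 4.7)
The plan is to reduce global stability of $(F,X)$ to the fact that the linear system $\mathbf{x}(k+1) = \Lambda \mathbf{x}(k)$ is a contraction in a suitable norm, and then to transport this contraction property across the nonlinear inequality provided by the mean value theorem.

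First I would fix $\mathbf{x},\mathbf{y}\in X$ and note that, because each $X_i$ is an interval, the product $X$ is convex, so the segment joining $\mathbf{x}$ and $\mathbf{y}$ lies in $X$. Applying the mean value theorem to each coordinate $F_i$ along this segment and taking absolute values gives
\[
|F_i(\mathbf{x}) - F_i(\mathbf{y})| \;\le\; \sum_{j=1}^n \Lambda_{ij}\, |x_j - y_j|,
\]
which I would rewrite in componentwise vector form as $|F(\mathbf{x})-F(\mathbf{y})| \le \Lambda\,|\mathbf{x}-\mathbf{y}|$. Iterating this inequality (using nonnegativity of $\Lambda$, which is immediate from the definition) yields $|F^k(\mathbf{x})-F^k(\mathbf{y})| \le \Lambda^k\,|\mathbf{x}-\mathbf{y}|$ for every $k\ge 0$.

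Next I would invoke the standard spectral-radius fact that $\rho(\Lambda) < 1$ implies $\Lambda^k\to 0$ entrywise; equivalently, by Gelfand's formula there exists a vector norm $\|\cdot\|_*$ on $\mathbb{R}^n$ whose induced matrix norm satisfies $\|\Lambda\|_* = c < 1$. Applied to an arbitrary initial state $\mathbf{x}(0)\in X$ with $\mathbf{x}(k)=F^k(\mathbf{x}(0))$, the componentwise bound gives $\|\mathbf{x}(k+1)-\mathbf{x}(k)\|_* \le c^k\|\mathbf{x}(1)-\mathbf{x}(0)\|_*$, so the sequence $\{\mathbf{x}(k)\}$ is Cauchy. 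Since $X$ is a closed (hence complete) subset of $\mathbb{R}^n$, the iterates converge to some $\bar{\mathbf{x}}\in X$; continuity of $F$ promotes $\bar{\mathbf{x}}$ to a fixed point. Finally, for any other starting state $\mathbf{y}(0)$ with iterates $\mathbf{y}(k)$, the inequality $|\mathbf{x}(k)-\mathbf{y}(k)|\le \Lambda^k|\mathbf{x}(0)-\mathbf{y}(0)|$ combined with $\Lambda^k\to 0$ forces $\mathbf{y}(k)\to \bar{\mathbf{x}}$, establishing both uniqueness of $\bar{\mathbf{x}}$ and global attraction.

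The main obstacle is a bookkeeping point rather than a conceptual one: the componentwise inequality lives in the partially ordered cone $\mathbb{R}^n_{\ge 0}$, while the desired convergence is a statement in a normed space, so one must be careful to pass from ``$\Lambda^k \mathbf{v}\to 0$ componentwise for every $\mathbf{v}\ge 0$'' to ``the iterates are Cauchy in some genuine norm.'' I would handle this cleanly by producing the norm $\|\cdot\|_*$ with $\|\Lambda\|_*<1$ once and for all (for instance, a weighted $\ell^\infty$ norm built from a Perron-type positive eigenvector of $\Lambda + \varepsilon E$ for small $\varepsilon$, where $E$ is the all-ones matrix), which converts all subsequent estimates into standard Banach contraction arguments. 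A minor secondary issue is justifying that the $\sup$ defining $\Lambda_{ij}$ is achieved on the segment used in the MVT; this is immediate since we only need the bound $|\partial_j F_i|\le \Lambda_{ij}$ pointwise on $X$, which holds by definition.
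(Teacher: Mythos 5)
Your proof is correct: the mean value theorem on the convex set $X$ gives the componentwise bound $|F(\mathbf{x})-F(\mathbf{y})|\le \Lambda|\mathbf{x}-\mathbf{y}|$, and the weighted $\ell^\infty$ norm built from a Perron vector of $\Lambda+\varepsilon E$ correctly converts $\rho(\Lambda)<1$ into a genuine contraction, yielding a unique globally attracting fixed point. The paper itself does not prove this theorem but defers to \cite{BW12}, and your argument is essentially the one given there, so there is nothing to flag.
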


An important aspect of the dynamic stability described in theorem \ref{stability} is that it is not the standard notion of stability. In \cite{BW13} it is shown that if $\rho(\Lambda)<1$ then the dynamical network $(F,X)$ is not only stable but remains stable even if time-delays are introduced into the network's interactions. Since the introduction of such time-delays can have a destabilizing effect on a network, the type of stability considered in theorem \ref{stability} is stronger than the standard notion of stability. To distinguish between these two types of stability, the stability described in theorem \ref{stability} is given the following name \cite{BW13}.

\begin{definition}\label{def:intrinsic} \textbf{(Intrinsic Stability)}
If $\rho(\Lambda)<1$, where $\Lambda$ is a stability matrix of the dynamical network $(F,X)$, then we say that this network is \emph{intrinsically stable}.
\end{definition}

From an applications point of view, a system that is intrinsically stable is more robust with respect to changes in its environment that cause time delays. Since time delays are unavoidable in most any real network this suggests that, if feasible, it is preferable to design a stable network that is intrinsically stable versus one that is only stable. The goal in this section is to describe how intrinsic stability is also a natural notion for stability of a network with an evolving topology. The idea is that, not only is it possible to evolve the graph structure $G=(V,E,\omega)$ of a network with respect to an evolution rule $\tau$ but it is also possible to evolve the structure of a dynamical network $(F,X)$ with respect to this rule and maintain its stability under certain conditions.

Consider the class of dynamical networks $(F,X)$ having components of the form
\begin{equation}\label{eq:netclass}
F_i(\mathbf{x})=\sum_{j=1}^n A_{ij}f_{ij}(x_j), \ \ i=1,\dots,n
\end{equation}
where the \emph{interaction matrix} $A\in\{0,1\}^{n\times n}$ is an $n\times n$ matrix of zeros and ones and $f_{ij}:X_j\rightarrow\mathbb{R}$ are functions with bounded derivatives for all $1\leq i,j\leq n$. It is worth noting that the matrix $A$ in equation \eqref{eq:netclass} could be absorbed into the functions $f_{ij}$. The reason we use $A$ is for convenience as it will be the means by which we will evolve $(F,X)$.

\begin{definition}\textbf{(Topological Evolutions of Dynamical Networks)}
Let $\tau$ be a structural rule. Then the \emph{evolution} of $(F,X)$ in \eqref{eq:netclass} with respect to $\tau$ is the dynamical network $(F_{\tau},X_\tau)$ with components
\[
(F_{\tau}(\mathbf{x}))_i=\sum_{j=1}^m \tau(A)_{ij}f_{\tau(ij)}(x_j), \ \ i=1,\dots,m
\]
where $X_\tau=\mathbb{R}^{m}$ for $\tau(A)\in\mathbb{R}^{m\times m}$. Here $\tau(ij)=pq$ is the index such that $A_{\tau(ij)}=\tau(A)_{pq}$. We let $\Lambda_\tau$ denote the stability matrix of the evolved network $(F_{\tau},X_{\tau})$.
\end{definition}

To give an example of an evolving network we consider the class of dynamical networks known as discrete-time recurrent neural networks (DRNN). The stability of such systems has been the focus of a large number of investigations, especially time-delayed versions of these systems \cite{LWSL07}. The class of DRNN $(R,X)$ we consider has the form
\begin{equation}\label{eq:DRNN}
R_i(\mathbf{x})=a_i\mathbf{x}_i+\sum_{j=1,j\neq i}^n b_{ij}g_j(x_j)+c_i, \ \ i=1,\dots,n.
\end{equation}
Here the component $R_i$ describes the dynamics of the $i$th neuron where the matrix $D=diag(d_1,\dots,d_n)$ with $|d_i|<1$ is the \emph{feedback coefficient matrix}, the matrix $B\in\mathbb{R}^{n\times n}$ with $b_{ii}=0$ is the \emph{connection weight matrix}, and the constants $c_i$ are the \emph{exogenous inputs} to the network. In the general theory of recurrent neural networks the functions $g_j:\mathbb{R}\rightarrow\mathbb{R}$ are typically assumed to be differentiable, monotonically increasing, and bounded. Here, we make the additional assumption that each $g_j$ has a bounded derivative.

Before continuing we note that equation \eqref{eq:DRNN} can be written in the form of equation \eqref{eq:netclass} by setting
\begin{equation}\label{eq:DRNN2}
f_{ij}(x_j)=
\begin{cases}
a_jx_j+c_j, &\text{for} \hspace{0.2cm} i=j\\
b_{ij}g_j(x_j), &\text{for} \hspace{0.2cm} i\neq j
\end{cases}
\hspace{0.2cm} \text{and} \hspace{0.2cm}
A_{ij}=
\begin{cases}
0 \hspace{0.2cm} \text{if} \hspace{0.1cm} f_{ij}(x_j)\equiv0,\\
1 \hspace{0.2cm} \text{otherwise}.
\end{cases}
\end{equation}
In the following example we evolve the structure of a DRNN.

\begin{example}\label{ex:DRNN}\textbf{(Topological Evolution of a DRNN)}
Consider the recurrent dynamical network $(R,\mathbb{R}^4)$ given by \eqref{eq:DRNN} where $a_j=\alpha$, $b_{ij}=\beta$ for $i\neq j$ and is zero otherwise, $c_j=\gamma$, and $g_j(x)=\tanh(x)$ for $1 \leq i,j\leq 4$ where $|\alpha|<1$ and $\beta\in\mathbb{R}$.

Let $A$ be the interaction matrix
\[
A=
\left[\begin{array}{cccc}
1&1&0&0\\
1&1&1&0\\
0&0&1&1\\
1&0&1&1
\end{array}\right].
\]
Here we choose the function $g_j(x)=\tanh(x)$ as this is a standard activation function used to model neural interactions in network science as it converts continuous inputs to binary outputs. The dynamical network $(R,\mathbb{R}^4)$ is then given by
\[
R(\mathbf{x})=
\left[\begin{array}{l}
\alpha x_1+\beta[\tanh(x_2)]+\gamma\\
\alpha x_2+\beta[\tanh(x_1)+\tanh(x_3)]+\gamma\\
\alpha x_3+\beta[\tanh(x_4)]+\gamma\\
\alpha x_4+\beta[\tanh(x_1)+\tanh(x_3)]+\gamma
\end{array}\right].
\]

We evolve the topology of the network $(R,\mathbb{R}^4)$ using the rule $\delta(V)=\{v\in V: d_{in}(v)=1\}$, which is the same rule used in example \ref{ex0}. This rule evolves this matrix $A$ and the dynamical network $(R,\mathbb{R}^4)$ into $\delta(A)$ and $(R_{\delta},\mathbb{R}^6)$ where
\[
\delta(A)=
\left[\begin{array}{cccccc}
1&0&0&1&1&0\\
0&1&1&0&0&1\\
1&0&1&0&0&0\\
0&1&0&1&0&0\\
1&0&0&0&1&0\\
0&1&0&0&0&1
\end{array}\right] \ \text{and} \
R_{\delta}(\mathbf{x})=\left[\begin{array}{l}
\alpha x_1+\beta[\tanh(x_4)+\tanh(x_5)]+\gamma\\
\alpha x_2+\beta[\tanh(x_3)+\tanh(x_6)]+\gamma\\
\alpha x_3+\beta[\tanh(x_1)]+\gamma\\
\alpha x_4+\beta[\tanh(x_2)]+\gamma\\
\alpha x_5+\beta[\tanh(x_1)]+\gamma\\
\alpha x_6+\beta[\tanh(x_2)]+\gamma
\end{array}\right],
\]
respectively. Using the fact that $\sup_{x\in\mathbb{R}}|\frac{d}{dx}\tanh(x)|=1$ the stability matrix $\Lambda$ of $(R,\mathbb{R}^4)$ and the stability matrix $\Lambda_{\delta}$ of $(R_{\delta},\mathbb{R}^6)$ are given by
\[
\Lambda=\left[\begin{array}{cccc}
|\alpha|&|\beta|&0&0\\
|\beta|&|\alpha|&|\beta|&0\\
0&0&|\alpha|&|\beta|\\
|\beta|&0&|\beta|&|\alpha|
\end{array}\right] \ \ \text{and} \ \
\Lambda_{\delta}=
\left[\begin{array}{cccccc}
|\alpha|&0&0&|\beta|&|\beta|&0\\
0&|\alpha|&|\beta|&0&0&|\beta|\\
|\beta|&0&|\alpha|&0&0&0\\
0&|\beta|&0&|\alpha|&0&0\\
|\beta|&0&0&0&|\alpha|&0\\
0&|\beta|&0&0&0&|\alpha|
\end{array}\right].
\]
The graphs $G$ and $\delta(G)$ associated with the stability matrices $\Lambda$ and $\Lambda_{\delta}$ are shown in figure \ref{fig0} left and center, respectively.

Importantly, $\delta(\Lambda)=\Lambda_{\delta}$ so that, if the stability matrix of $(R,\mathbb{R}^4)$ is evolved by $\delta$ the result is the stability matrix of the expanded network $(R_{\delta},\mathbb{R}^6)$.
\end{example}

The fact that $\delta(\Lambda)=\Lambda_{\delta}$ in this example is not a coincidence but is a simply a consequence of how dynamical network evolutions are defined. That is, if any dynamical network $(F,X)$ given by \eqref{eq:netclass} with stability matrix $\Lambda$ is evolved with respect to the structural rule $\tau$ the resulting dynamical network $(F_{\tau},X_{\tau})$ has the stability matrix $\Lambda_{\tau}=\tau(\Lambda)$. It is therefore possible to test the stability of an evolved version of a dynamical network $(F,X)$ by evolving its stability matrix $\Lambda$.

\begin{figure}
\begin{center}
\begin{tabular}{c}
    \begin{overpic}[scale=.48]{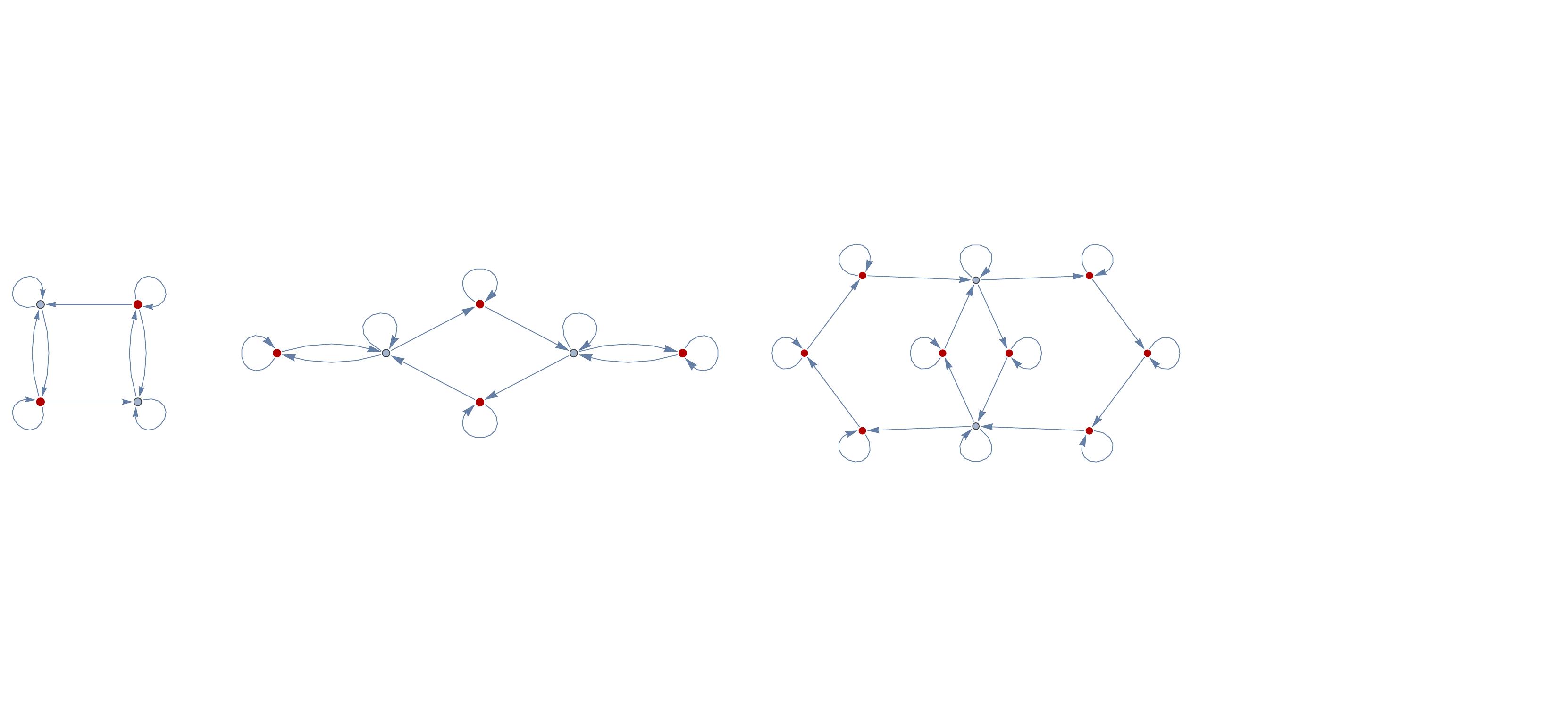}
    \put(5,11){$G$}
    \put(28,11){$\delta(G)$}
    \put(59,11){$\delta^2(G)$}

    \put(.75,16.5){\tiny $|\alpha|$}
    \put(.75,28){\tiny $|\alpha|$}
    \put(8.75,16.5){\tiny $|\alpha|$}
    \put(8.75,28){\tiny $|\alpha|$}
    \put(0,22.25){\tiny $|\beta|$}
    \put(3.25,22.25){\tiny $|\beta|$}
    \put(6.25,22.25){\tiny $|\beta|$}
    \put(9.25,22.25){\tiny $|\beta|$}
    \put(4.75,18.25){\tiny $|\beta|$}
    \put(4.75,26.25){\tiny $|\beta|$}

    \put(13.25,22.25){\tiny $|\alpha|$}
    \put(45.9,22.25){\tiny $|\alpha|$}
    \put(23.25,26){\tiny $|\alpha|$}
    \put(36,26){\tiny $|\alpha|$}
    \put(29.5,28.5){\tiny $|\alpha|$}
    \put(29.5,16){\tiny $|\alpha|$}
    \put(20,24){\tiny $|\beta|$}
    \put(20,20.5){\tiny $|\beta|$}
    \put(39.5,24){\tiny $|\beta|$}
    \put(39.5,20.5){\tiny $|\beta|$}
    \put(26.5,25){\tiny $|\beta|$}
    \put(26.5,19.5){\tiny $|\beta|$}
    \put(32.75,25){\tiny $|\beta|$}
    \put(32.75,19.5){\tiny $|\beta|$}

    \put(49.25,24){\tiny $|\alpha|$}
    \put(73.25,24){\tiny $|\alpha|$}
    \put(56,22.25){\tiny $|\alpha|$}
    \put(66.5,22.25){\tiny $|\alpha|$}

    \put(53.5,30.25){\tiny $|\alpha|$}
    \put(61,30.25){\tiny $|\alpha|$}
    \put(69,30.25){\tiny $|\alpha|$}
    \put(53.5,14.5){\tiny $|\alpha|$}
    \put(61,14.5){\tiny $|\alpha|$}
    \put(69,14.5){\tiny $|\alpha|$}

    \put(57.25,16.5){\tiny $|\beta|$}
    \put(57.25,28){\tiny $|\beta|$}
    \put(65,16.5){\tiny $|\beta|$}
    \put(65,28){\tiny $|\beta|$}

    \put(54,19.5){\tiny $|\beta|$}
    \put(54,25){\tiny $|\beta|$}
    \put(68.75,19.5){\tiny $|\beta|$}
    \put(68.75,25){\tiny $|\beta|$}

    \put(59,19.5){\tiny $|\beta|$}
    \put(59,25){\tiny $|\beta|$}
    \put(63.55,19.5){\tiny $|\beta|$}
    \put(63.55,25){\tiny $|\beta|$}
    \end{overpic}
\end{tabular}
\end{center}

\vspace{-.8in}

\caption{The sequence $G$, $\delta(G)$, and $\delta^2(G)$ represents the topology of the recurrent dynamical network $(R,\mathbb{R}^4)$ and its sequence of evolutions $(R_{\delta},\mathbb{R}^6)$, and $(R_{\delta^2},\mathbb{R}^{10})$ respectively, considered in example \ref{ex:DRNN}. The vertices selected by the rule $\delta$ are highlighted in each graph. The parameters $\alpha,\beta\in\mathbb{R}$.}\label{fig0}
\end{figure}

In previous studies of dynamical networks including DRNN a goal has been to determine under what condition(s) a given network has stable dynamics (see for instance the references in \cite{LWSL07}). Here we consider a different but related question which is, under what condition(s) does a dynamical network with an evolving structure of interactions maintain its stability as it evolves. As a partial answer to this quite general question, we show that if the dynamical network is not only stable but intrinsically stable then its stability is preserved under the evolution of its topology with respect to any structural rule $\tau$.

\begin{theorem}\label{thm:evostability} \textbf{(Stability of Structurally Evolving Networks)}
Let $(F,X)$ be a dynamical network given by \eqref{eq:netclass} and $\tau$ a structural rule. The evolved dynamical network $(F_{\tau},X_{\tau})$ is intrinsically stable if and only if $(F,X)$ is intrinsically stable.
\end{theorem}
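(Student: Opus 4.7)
The plan is to reduce the biconditional to a single spectral identity: I will show that $\rho(\Lambda_{\tau}) = \rho(\Lambda)$, so that $\rho(\Lambda)<1$ if and only if $\rho(\Lambda_{\tau})<1$, which by Definition \ref{def:intrinsic} is exactly the statement of the theorem.

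The first step is to establish that forming the stability matrix commutes with structural evolution, i.e.\ that $\Lambda_{\tau} = \tau(\Lambda)$. From the explicit formula $(F_{\tau}(\mathbf{x}))_i = \sum_{j} \tau(A)_{ij}\, f_{\tau(ij)}(x_j)$, the $(i,j)$ partial derivative equals $\tau(A)_{ij}\, f'_{\tau(ij)}(x_j)$; taking $\sup_{\mathbf{x}}$ of its absolute value and using the weight-preservation property of evolutions noted after Definition \ref{def:exp} (every edge of $\tau(G)$ inherits its weight from a unique edge of $G$, so $\mu(\mathcal{E})=\omega(E)$) yields entry-by-entry agreement between $\Lambda_{\tau}$ and $\tau(\Lambda)$.

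With this identification in hand, I would apply Theorem \ref{thm1} to the graph $G$ whose weighted adjacency matrix is $\Lambda$. Letting $C_1,\dots,C_m$ denote the strongly connected components of $G|\bar{S}$ with $S=\tau(V)$, the theorem gives
\[
\sigma(\Lambda_{\tau}) \;=\; \sigma(\Lambda) \,\cup\, \sigma(C_1)^{n_1-1}\cup\cdots\cup \sigma(C_m)^{n_m-1},
\]
and hence $\rho(\Lambda_{\tau}) = \max\{\rho(\Lambda),\rho(C_1),\dots,\rho(C_m)\}$. Because $\Lambda$ has nonnegative entries (being a matrix of suprema of absolute values) and each $M(C_i)$ is, up to relabeling of vertices, a principal submatrix of $\Lambda$, the classical Perron--Frobenius inequality for nonnegative matrices gives $\rho(M(C_i)) \le \rho(\Lambda)$ for every $i$. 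Therefore the maximum above equals $\rho(\Lambda)$, yielding $\rho(\Lambda_{\tau}) = \rho(\Lambda)$ and with it the theorem.

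The main obstacle I anticipate is the bookkeeping in the first step: carefully unpacking the index convention $\tau(ij)=pq$ and confirming that the sup over $X_{\tau}$ of $|\tau(A)_{ij}\, f'_{\tau(ij)}(x_j)|$ reproduces exactly the correct entry of the matrix evolution $\tau(\Lambda) = M(\tau(G))$. Once the identity $\Lambda_{\tau} = \tau(\Lambda)$ is cleanly in place, the remainder is an immediate combination of Theorem \ref{thm1} with the standard fact that the spectral radius of a principal submatrix of a nonnegative matrix does not exceed the spectral radius of the whole matrix, so no further dynamical input is required.
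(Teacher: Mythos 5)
Your proposal is correct and follows essentially the same route as the paper's proof: identify $\Lambda_{\tau}=\tau(\Lambda)$, apply Theorem \ref{thm1} to the stability matrix to get $\sigma(\tau(\Lambda))=\sigma(\Lambda)\cup\sigma(\Lambda_1)^{n_1-1}\cup\dots\cup\sigma(\Lambda_m)^{n_m-1}$, and invoke the nonnegative-matrix fact that principal submatrices have spectral radius at most $\rho(\Lambda)$ (Corollary 8.1.20 of \cite{HJ90}) to conclude $\rho(\Lambda_{\tau})=\rho(\Lambda)$. The only cosmetic difference is that you verify the identity $\Lambda_{\tau}=\tau(\Lambda)$ by differentiating the components explicitly, whereas the paper simply asserts it as a consequence of the definition of a topological evolution.
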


The importance of theorem \ref{thm:evostability} is that it describes a general mechanism for evolving the structure of a network that preserves the network's stability. On one hand this has potential applications to network design as one could create a large network that is dynamically stable by designing a much smaller network and evolving it over any number of rules. On the other hand, one could investigate what structural rules can be used to model the growth of certain types of networks, specifically those networks that preserve a distinct function as they grow over time, e.g. biological networks including neural, gene regulatory, protein-protein interaction, and metabolic networks. The goal in this case would be to discover what structural rules model the growth of such networks.

A proof of theorem \ref{thm:evostability} is the following.

\begin{proof}
Suppose the dynamical network $(F,X)$ is intrinsically stable so that in particular $\rho(\Lambda)<1$. Given a structural rule $\tau$, theorem \ref{thm1} stated in terms of matrices implies that
\begin{equation}\label{eq:matrixver}
\sigma(\tau(\Lambda))=\sigma(\Lambda)\cup\sigma(\Lambda_1)^{n_1-1}\cup\dots\cup\sigma(\Lambda_m)^{n_m-1}
\end{equation}
where $\Lambda_i$ are square submatrices of $\Lambda$ and $n_i\geq 0$ for all $1\leq i\leq m$.

Since $\Lambda$ is a nonnegative matrix then $\rho(\Lambda_i)\leq\rho(\Lambda)$ for all $1\leq i\leq m$ (See, for instance, corollary 8.1.20 in \cite{HJ90}). Hence, $\rho(\tau(\Lambda))=\rho(\Lambda)<1$ by equation \eqref{eq:matrixver}. Since $\tau(\Lambda)=\Lambda_{\tau}$ is the stability matrix of $(F_{\tau},X_{\tau})$ then this implies that $\rho(\Lambda_{\tau})<1$ so that the evolved network $(F_{\tau},X_{\tau})$ is intrinsically stable. Conversely, if $(F_{\tau},X_{\tau})$ is intrinsically stable then $\rho(\Lambda_{\tau})<1$ and equation \eqref{eq:matrixver} immediately implies that $\rho(\Lambda)<1$, completing the proof.
\end{proof}

For the recurrent network $(R,\mathbb{R}^{4})$ in example \ref{ex:DRNN} the stability matrix $\Lambda$ has eigenvalues $\sigma(\Lambda)=\{|a|\pm\sqrt{2}|b|,|a|,|a|\}$. Hence, $(R,\mathbb{R}^{4})$ is intrinsically stable if $||a|\pm\sqrt{2}|b||<1$. If this condition holds then, theorem \ref{thm:evostability} implies that the evolved network $(R_{\delta},\mathbb{R}^6)$ is also intrinsically stable. Here, one can directly compute that $\sigma(\Lambda_{\delta})=\{|a|\pm\sqrt{2}|b|\}\cup\{|a|\}^4$ verifying the result.

It is worth noting that if $(F,X)$ is given by \eqref{eq:netclass} then its evolution $(F_{\tau},X_{\tau})$ is also of the same form. Hence, $(F_{\tau},X_{\tau})$ can also be evolved by $\tau$, which results in the dynamical network $(F_{\tau^2},X_{\tau^2})$. As a direct consequence to theorem \ref{thm:evostability}, if $(F,X)$ is intrinsically stable then any sequence of evolutions of $(F,X)$ results in an intrinsically stable network.

\begin{corollary}\label{cor:1}\textbf{(Sequential Network Evolutions)}
Suppose $(F,X)$ is a dynamical network given by \eqref{eq:netclass} and $\tau$ is a structural rule. If $(F,X)$ is intrinsically stable then $(F_{\tau^j},X_{\tau^j})$ is intrinsically stable for all $j\geq 0$.
\end{corollary}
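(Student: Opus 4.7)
The plan is a straightforward induction on $j$, using Theorem \ref{thm:evostability} as the engine. The only substantive thing to verify is that the hypothesis of Theorem \ref{thm:evostability}, namely that the network has the form \eqref{eq:netclass}, is preserved under evolution by $\tau$. Once that observation is in place, the corollary follows by iterating.

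For the base case $j=0$, we have $(F_{\tau^0},X_{\tau^0})=(F,X)$, which is intrinsically stable by assumption. For the inductive step, assume that $(F_{\tau^j},X_{\tau^j})$ is intrinsically stable and has the form \eqref{eq:netclass}. Then Theorem \ref{thm:evostability} applied to this network and the rule $\tau$ yields that $(F_{\tau^{j+1}},X_{\tau^{j+1}})=((F_{\tau^j})_{\tau},(X_{\tau^j})_{\tau})$ is intrinsically stable. To close the induction we also need $(F_{\tau^{j+1}},X_{\tau^{j+1}})$ to again be of the form \eqref{eq:netclass}, which is exactly the remark the paper makes immediately before the corollary: the evolution of a network of this form is still a sum of the shape $\sum_k \tilde{A}_{ik}\tilde{f}_{ik}(x_k)$ with a $\{0,1\}$-valued interaction matrix $\tilde{A}=\tau(A)$ and coordinate functions $\tilde{f}_{ik}$ inherited from the original $f_{ij}$ (via the index relation $\tau(ij)=pq$), each of which still has a bounded derivative since it equals some $f_{ij}$.

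The main obstacle, such as it is, lies entirely in this bookkeeping observation that \eqref{eq:netclass} is closed under evolution. Concretely, one must check that the entries of $\tau(A)$ are again in $\{0,1\}$ (this is immediate from the construction of an evolved graph, since merging branches does not introduce new edges other than those already present in $G$), and that the functions appearing in the components of $F_\tau$ are drawn from the original collection $\{f_{ij}\}$, so the boundedness of their derivatives is inherited. With this verified, the induction goes through unchanged.

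Because the induction is uniform in $j$, the same argument gives intrinsic stability of $(F_{\tau^j},X_{\tau^j})$ for all $j\geq 0$ simultaneously, which is what the corollary asserts. No additional spectral computation is required beyond what Theorem \ref{thm:evostability} already supplies, since the spectral radius $\rho(\Lambda_{\tau^j})$ is controlled at every step by $\rho(\Lambda)<1$ through iterated application of \eqref{eq:matrixver}.
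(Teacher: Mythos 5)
Your proposal is correct and matches the paper's reasoning exactly: the paper presents this corollary as a direct consequence of Theorem \ref{thm:evostability} combined with the remark, made immediately before the statement, that a network of the form \eqref{eq:netclass} evolves into a network of the same form, which is precisely the closure property your induction relies on. Your write-up merely makes the induction explicit, so there is nothing to add.
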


Continuing the sequence of network evolutions in example \ref{ex:DRNN}, if we evolve the network $(R_{\delta},\mathbb{R}^{6})$ again with respect to $\delta$ the result is the dynamical network $(R_{\delta^2},\mathbb{R}^{10})$ whose stability matrix $\Lambda_{\delta^2}$ is represented by the graph $\delta^2(G)$ shown in figure \ref{fig0} (right). Here, one can check that $$\sigma(\Lambda_{\delta^2})=\{a\pm\sqrt{2}b\}\cup\{a\}^8.$$
As guaranteed by corollary \ref{cor:1}, $(R_{\tau^2},\mathbb{R}^{10})$ is intrinsically stable if and only if the original network $(R,\mathbb{R}^4)$ is also intrinsically stable.

Similar to graph evolutions, one can generalize the notion of sequentially evolving a dynamical network $(F,X)$ over a single rule to $\tau$ to sequentially evolving $(F,X)$ over the sequence $\tau_1$, $\tau_2$, $\tau_3,\dots$ where each $\tau_i$ is a structural rule. The result is the sequence of dynamical network's
\[
(F,X), \ (F_{\bar{\tau}_1},X_{\bar{\tau}_1}), \ (F_{\bar{\tau}_2},X_{\bar{\tau}_2}), \ (F_{\bar{\tau}_3},X_{\bar{\tau}_3}), \dots
\]
where $(F_{\bar{\tau}_i},X_{\bar{\tau}_i})$ is the dynamical network $(F,X)$ sequentially evolved over the rules $\tau_1,\tau_2\dots,\tau_i$.

It is worth mentioning that, since the rule $\tau$ used in theorem \ref{thm:evostability} is arbitrary, evolving $(F,X)$ over any sequence of rules $\tau_1$, $\tau_2$, $\tau_3,\dots$ will not destabilize the network if it is intrinsically stable. In fact it follows that, if $(F,X)$ is intrinsically stable then $(F_{\bar{\tau}_j},X_{\bar{\tau}_j})$ is intrinsically stable for all $j\geq 0$, which is a generalization of corollary \ref{cor:1}.

In contrast, if a dynamical network $(F,X)$ is stable but not intrinsically stable, it can fail to maintain its stability as its topology evolves even if it is a simple linear dynamical network as is illustrated in the following example.

\begin{example}\label{ex:loss} \textbf{(Loss of Stability)}
Let $\tau$ be the rule that selects all vertices of a graph without loops. Consider the linear dynamical network $(F,\mathbb{R}^3)$ given by
\[
F(\mathbf{x})=
\left[\begin{array}{rrr}
0&-1&3/4\\
0&0&1/2\\
-1/2&0&3/2
\end{array}\right]\left[\begin{array}{r}
x_1\\
x_2\\
x_3
\end{array}\right].
\]
Its evolution $(F_{\tau}, \mathbb{R}^4)$ with respect to $\tau$ is given by
\begin{equation}\label{eq:first}
F_{\tau}(\mathbf{x})=
\left[\begin{array}{rrrr}
0&-1&0&3/4\\
0&0&1/2&0\\
-1/2&0&-3/2&0\\
-1/2&0&0&-3/2
\end{array}\right]
\left[\begin{array}{r}
x_1\\
x_2\\
x_3\\
x_4
\end{array}\right].
\end{equation}
If $A\in\mathbb{R}^{3\times 3}$ is the matrix in \eqref{eq:first} such that $F(\mathbf{x})=A\mathbf{x}$ then $F_{\tau}(\mathbf{x})=\tau(A)\mathbf{x}$ where $\tau(A)\in\mathbb{R}^{4\times 4}$. Here one can compute that $\rho(A)\approx.938$ whereas $\rho(\tau(A))=3/2$. Since both systems are linear, it follows immediately that $(F,\mathbb{R}^3)$ is a stable dynamical network  whereas $(F_{\tau},\mathbb{R}^4)$ is unstable.

The reason $(F,\mathbb{R}^3)$ can lose its stability as it is evolved is that it is not intrinsically stable. That is, the stability matrix of $(F,X)$ is the matrix $|A|$, which is the matrix with entries $|A|_{ij}=|A_{ij}|$ with spectral radius $\rho(|A|)=1.787$. Since this is greater than one, the system is stable but not intrinsically stable. Therefore, it is possible, as it is demonstrated here, for the network to lose stability as its topology evolves.
\end{example}

Example \ref{ex:loss} is meant to emphasize the fact that more than the standard notion of stability is needed to guarantee a network's stability as a network evolves under some rule $\tau$. This together with the results of theorem \ref{thm:evostability} suggest that networks, possibly even real networks, need to maintain a stronger version of stability in order to preserve their function as their structure evolves under a structural rule.

\section{Conclusion}\label{conc}

In this paper we have described a method that evolves the topology of a network in a way that preserves both the spectral and local structure of the network. This method, which we refer to as a topological evolution of a network, is quite flexible in that the topology of any network can be evolved around any network core, which is any subset of the network elements.

As the results of an evolution depend on the particular subset or network core that is used, this method has the potential to model growth in a variety of networks. What is important in modeling the growth of a specific network is determining a particular subset of network elements that the network will evolve around. As this can be any subset of the network's elements there is a seemly unending number of ways in which a network can be evolved. In this sense the expert, e.g. biologist, sociologist, computer scientist, is needed to pick the set of elements that can be used to best model the growth of the particular biological, social, or technological network under consideration.

Beside modeling network growth this technique of evolving a network's topology can also be used to compare the topology of different networks. That is, two networks can have very different topologies but when evolved with respect to some structural rule $\tau$, the resulting networks may have the same topology. If this is the case, we say the two networks are evolution equivalent meaning the two networks are similar with respect to the rule $\tau$. In this way, the rule $\tau$ allows those studying a particular class of networks a way of comparing the \emph{evolved topology} of these networks and drawing conclusions about both the evolved and original networks. Of course, it is again important that this rule be designed by an expert to have some significance with respect to the nature of the network(s) being considered. The main idea we put forth here is that many networks currently under study are likely to have features that come to light only as these networks are evolved.

Because this method of evolving a network's topology also preserves the network's spectral structure it can also be used to study the interplay of network growth and function. The reason is that the dynamics of a network is related to the network's spectrum, the network's dynamics is in turn related to how well the network is able to perform specific tasks. In particular, a network's growth can have a destabilizing effect on the network's dynamics, which can lead to network failure.

We show that if a network's dynamics is \emph{intrinsically stable} then the network will remain intrinsically stable even as the network's topology evolves, i.e. the growth of the network will not change the network's dynamics, at least not qualitatively. We note that the notion of intrinsic stability has been previously studied in the context of dynamical networks with time delays, were is was shown that an intrinsically stable dynamical network will remain intrinsically stabile even if time delays are introduced into or removed from the network's interactions \cite{BW13}. Hence, networks that are intrinsically stable are dynamically resilient to both changes in the network's topology and changes in the network's environment that cause time delays. Moreover, because it is straightforward to verify whether a network is intrinsically stable, this notion of stability has potential to be both a practical and useful tool in the design of dynamically stable networks.

\section{Appendix: Isospectral Graph Reductions}\label{appendix}

To prove theorem \ref{thm1} we need the notion of an \emph{isospectral graph reduction}, which is a way of reducing the size of a graph while essentially preserving its set of eigenvalues. Since there is a one-to-one relation between the graphs we consider and the matrices $M\in\mathbb{R}^{n\times n}$, there is also an equivalent theory of \emph{isospectral matrix reductions}. Both types of reductions will be useful to us.

For the sake of simplicity we begin by defining an isospectral matrix reduction. For these reductions we need to consider matrices of rational functions. The reason is that, by the Fundamental Theorem of Algebra, a matrix $A\in\mathbb{R}^{n\times n}$ has exactly $n$ eigenvalues including multiplicities. In order to reduce the size of a matrix while at the same time preserving its eigenvalues we need something that carries more information than scalars. The objects we will use are rational functions. The specific reasons for using rational functions can be found in \cite{BWBook}, chapter 1.

We let $\mathbb{W}^{n\times n}$ be the set of $n\times n$ matrices whose entries are rational functions  $p(\lambda)/q(\lambda)\in\mathbb{W}$, where $p(\lambda)$ and $q(\lambda)\neq0$ are polynomials with real coefficients in the variable $\lambda$. The eigenvalues of the matrix $M=M(\lambda)\in\mathbb{W}^{n\times n}$ are defined to be solutions of the \emph{characteristic equation}
\[
\det(M(\lambda)-\lambda I)=0,
\]
which is an extension of the standard definition of the eigenvalues for a matrix with complex entries.

For $M\in\mathbb{R}^{n\times n}$ let $N=\{1,\ldots,n\}$. If the sets $R,C\subseteq N$ are proper subsets of $N$, we denote by $M_{RC}$ the $|R| \times |C|$ \emph{submatrix} of $M$ with rows indexed by $R$ and columns by $C$. The isospectral reduction of a square real valued matrix is defined as follows.

\begin{definition}\label{def:isored} \textbf{(Isospectral Matrix Reduction)}
The \emph{isospectral reduction} of a matrix $M\in\mathbb{R}^{n\times n}$ over the proper subset $S\subseteq N$ is the matrix
\[
\mathcal{R}_S(M) = M_{SS} - M_{S\bar{S}}(M_{\bar{S}\bar{S}}-\lambda I)^{-1} M_{\bar{S}S}\in\mathbb{W}^{|S|\times|S|}.
\]
\end{definition}

The relation between the eigenvalues of the matrix $M$ and its isospectral reduction $\mathcal{R}_S(M)$ is described by the following theorem \cite{BW12,BWBook}.

\begin{theorem}\label{thm:maintheorem}\textbf{(Spectrum of Isospectral Reductions)} For $M\in\mathbb{W}^{n\times n}$ and the proper set $S\subset N$ the eigenvalues of the isospectral reduction $\mathcal{R}_S(M)$ are
$$\sigma\big(\mathcal{R}_S(M)\big)=\sigma(M)-\sigma(M_{\bar{S}\bar{S}}).$$
\end{theorem}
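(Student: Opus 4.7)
The plan is to prove Theorem \ref{thm:maintheorem} via the Schur complement identity applied over the field $\BW$ of rational functions. Writing the block decomposition
\[
M-\lambda I=\begin{pmatrix} M_{SS}-\lambda I & M_{S\bar S} \\ M_{\bar S S} & M_{\bar S\bar S}-\lambda I \end{pmatrix},
\]
I observe that $M_{\bar S\bar S}-\lambda I$ is invertible as an element of $\BW^{|\bar S|\times|\bar S|}$, since its determinant $\det(M_{\bar S\bar S}-\lambda I)$ is a nonzero rational function of $\lambda$. Hence the standard Schur complement factorization (which is a purely algebraic identity valid over any field) yields
\[
\det(M-\lambda I)=\det(M_{\bar S\bar S}-\lambda I)\cdot\det\!\bigl(\mathcal{R}_S(M)-\lambda I\bigr),
\]
where the second factor on the right is precisely the determinant of the isospectral reduction as defined in Definition \ref{def:isored}.

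Next I would translate this determinant identity into the claimed equality of spectra. Rearranging,
\[
\det\!\bigl(\mathcal{R}_S(M)-\lambda I\bigr)=\frac{\det(M-\lambda I)}{\det(M_{\bar S\bar S}-\lambda I)},
\]
which exhibits the characteristic function of $\mathcal{R}_S(M)$ as a rational function whose zeros and poles (counted with multiplicity) are exactly those of the numerator and denominator. By the convention adopted earlier in the paper, where $\sigma(A)-\sigma(B)$ denotes the multiset in which an element $\alpha$ appears with multiplicity $\max\{m-n,0\}$ (with $m,n$ the respective multiplicities), the multiset of zeros of this rational function is $\sigma(M)-\sigma(M_{\bar S\bar S})$. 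Since the eigenvalues of $\mathcal{R}_S(M)$ are defined as the solutions of its characteristic equation, this gives $\sigma(\mathcal{R}_S(M))=\sigma(M)-\sigma(M_{\bar S\bar S})$.

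The main obstacle I anticipate is being careful about the interpretation of ``eigenvalue'' when the matrix has rational-function entries: one must argue that common factors in numerator and denominator of $\det(M-\lambda I)/\det(M_{\bar S\bar S}-\lambda I)$ cancel exactly in a way consistent with the multiset subtraction rule. This amounts to a careful bookkeeping of multiplicities, but since both $\det(M-\lambda I)$ and $\det(M_{\bar S\bar S}-\lambda I)$ are themselves rational functions whose zeros define $\sigma(M)$ and $\sigma(M_{\bar S\bar S})$ via the characteristic equation, the multiset identity follows directly. Everything else (the block determinant identity, invertibility over $\BW$) is standard linear algebra over the rational function field, so no additional machinery is needed.
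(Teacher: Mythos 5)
The paper does not actually prove this theorem: it is imported from the references \cite{BW12,BWBook}, so there is no internal proof to compare against. Your Schur-complement route is the standard argument and is essentially the one used in those references: the block factorization gives
\[
\det(M-\lambda I)=\det(M_{\bar S\bar S}-\lambda I)\,\det\bigl(\mathcal{R}_S(M)-\lambda I\bigr),
\]
and in the only case the paper ever uses --- $M\in\mathbb{R}^{n\times n}$, where both determinants on the right are ordinary polynomials whose root multisets are $\sigma(M_{\bar S\bar S})$ and, for the product, $\sigma(M)$ --- your bookkeeping is airtight: the zeros of a quotient of two polynomials are exactly the multiset difference under the paper's convention, and the proof is complete.

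For the statement as literally written, with $M\in\mathbb{W}^{n\times n}$, the final step is a genuine gap and not the routine cancellation you describe. Two things fail. First, $\det(M_{\bar S\bar S}-\lambda I)$ need not be a nonzero element of $\mathbb{W}$: for $M_{\bar S\bar S}=[\lambda]$ it vanishes identically and the reduction is undefined. Second, and more seriously, when the entries of $M_{\bar S\bar S}$ have poles, the quotient $\det(M-\lambda I)/\det(M_{\bar S\bar S}-\lambda I)$ acquires zeros at those poles which are not recorded in $\sigma(M)-\sigma(M_{\bar S\bar S})$. Concretely, take $n=2$, $S=\{1\}$, and
\[
M=\begin{pmatrix}2 & 1\\ 1 & \tfrac{1}{\lambda-2}\end{pmatrix}.
\]
Then $\det(M-\lambda I)=\lambda^2-2\lambda-2$, so $\sigma(M)=\{1\pm\sqrt3\}$, while $\sigma(M_{\bar S\bar S})=\{1\pm\sqrt2\}$, so the claimed right-hand side is $\{1\pm\sqrt3\}$; but
\[
\det\bigl(\mathcal{R}_S(M)-\lambda I\bigr)=\frac{(\lambda-2)(\lambda^2-2\lambda-2)}{-\lambda^2+2\lambda+1}
\]
has the additional zero $\lambda=2$, inherited from the pole of $M_{\bar S\bar S}$. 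The cited references deal with this by restricting the admissible weights and by augmenting both sides of the spectral identity with the relevant pole sets in the general rational-function setting. So your argument proves the theorem exactly in the real-entry case on which the rest of the paper relies, but to establish it at the stated level of generality you must either impose such hypotheses explicitly or track poles alongside zeros in the multiset accounting.
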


Phrased in terms of graphs, if the graph $G=(V,E,\omega)$ is isospectrally reduced over some proper subset of its vertices $S\subset V$ then the resulting reduced graph $\mathcal{R}_S(G)=(\mathcal{V},\mathcal{E},\mu)$ with rational function weights has the eigenvalues
\[
\sigma\big(\mathcal{R}_S(G)\big)=\sigma(G)-\sigma(G|\bar{S}).
\]
It is worth noting that the graph $G$ and its subgraph $G|\bar{S}$ may have no eigenvalues in common, in which case $\sigma(\mathcal{R}_S(G))=\sigma(G)$. However, for the proof of theorem \ref{thm1} we will be using the fact that an evolved graph $\mathcal{X}_S(G)$ and its restriction $\mathcal{X}_S(G)|\bar{S}$ have eigenvalues in common.

A proof of theorem \ref{thm1} is the following.

\begin{proof}
For the graph $G=(V,E,\omega)$ let $S=\{v_1,\dots,v_\ell\}$ where $V=\{v_1,\dots,v_n\}$. By a slight abuse in notation we also let $S$ denote the index set $S=\{1,\dots,\ell\}$ that indexes the vertices in $S$, so that the reduction $\mathcal{R}_S(G)=\mathcal{R}_S(M)$ where $M=M(G)$.

For the moment we assume that the graph $G|\bar{S}$ has the single strongly connected component $C_1$. The weighted adjacency matrix $M\in\mathbb{R}^{n\times n}$ then has the block form
\[
M=
\left[\begin{array}{cc}
U&W\\
Y&Z
\end{array}\right]
\]
where $U\in\mathbb{R}^{\ell\times\ell}$ is the matrix $U=M_{SS}$, which is the weighted adjacency matrix of $G|S$. The matrix $Z\in\mathbb{R}^{n-\ell\times n-\ell}$ is the matrix $Z=M_{\bar{S}\bar{S}}$, which is the weighted adjacency matrix of $G|\bar{S}=C_1$. The matrix $W=M_{S\bar{S}}\in\mathbb{R}^{\ell\times n-\ell}$ is the matrix of edges weights of edges from $G|S$ to $C_1$ and $Y=M_{\bar{S}S}\in\mathbb{R}^{n-\ell\times \ell}$ is the matrix of edge weights of edges from $C_1$ to $G|S$.

\begin{figure}
\begin{center}
\begin{tabular}{c}
    \begin{overpic}[scale=.43]{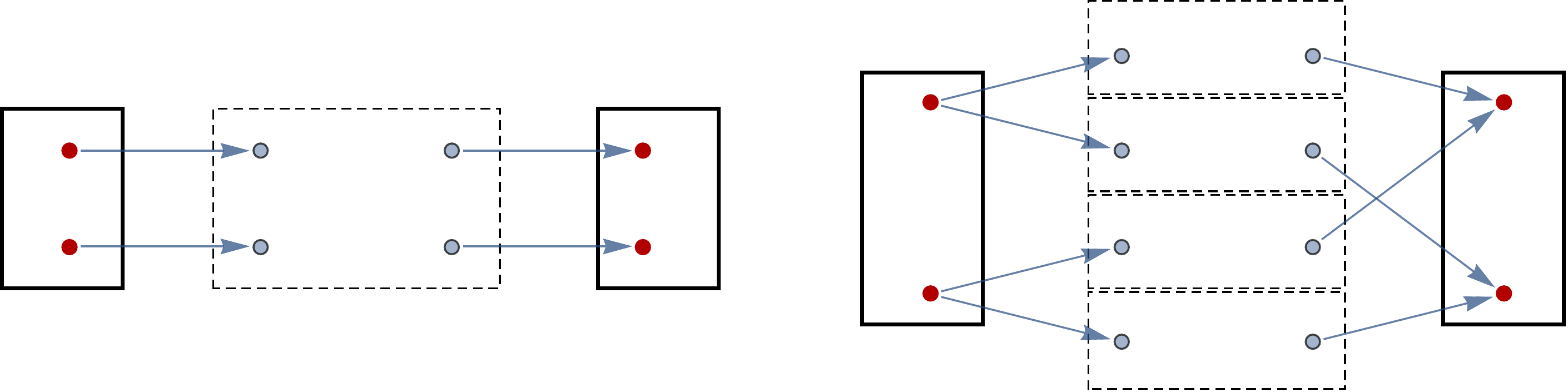}
    \put(21.75,-2){$G$}
    \put(21.5,11.5){$C_1$}
    \put(18,15.25){$v_3$}
    \put(18,8.25){$v_4$}
    \put(25,8.25){$v_6$}
    \put(25,15.25){$v_5$}
    \put(1,15.25){$v_1$}
    \put(1,8.25){$v_2$}
    \put(42,15.25){$v_7$}
    \put(42,8.25){$v_8$}

    \put(74,-3){$\mathcal{X}_S(G)$}
    \put(76,21){$C_1$}
    \put(76,15){$C_1$}
    \put(76,8.5){$C_1$}
    \put(76,2){$C_1$}
    \put(56,18){$v_1$}
    \put(56,6){$v_2$}
    \put(96.5,18){$v_7$}
    \put(96.5,6){$v_8$}

    \put(72,21){$v_3$}
    \put(72,14.75){$v_3$}

    \put(72,8.5){$v_4$}
    \put(72,2.5){$v_4$}

    \put(80.5,21){$v_5$}
    \put(80.5,14.75){$v_6$}

    \put(80.5,8.5){$v_5$}
    \put(80.5,2.5){$v_6$}

    \end{overpic}
\end{tabular}
\end{center}
  \caption{An example of a graph $G=(V,E,\omega)$ with the single strongly connected component $C_1=G|\bar{S}$ is shown (left), where solid boxes indicate the graph $G|S$. As there are two edges from $G|S$ to $C_1$ and two edges from $C_1$ to $G|S$ there are $2\times 2$ branches in $\mathcal{B}_S(G)$ containing $C_1$. These are merged together with $G|S$ to form $\mathcal{X}_S(G)$ (right).}\label{fig00}
\end{figure}

The evolution $\mathcal{X}_S(G)$ is the graph in which all component branches of the form
\[
\beta=v_i,e_{ip},C_1,e_{qj},v_j \ \ \text{for all} \ \ v_i,v_j\in S, v_p,v_q\in \bar{S} \ \ \text{and} \ \ e_{ip},e_{qj}\in E
\]
are merged together with the graph $G|S$ (see figure \ref{fig00}). The weighted adjacency matrix $\hat{M}=M(\mathcal{X}_S(G))$ has the block form
\[
\hat{M}=
\left[\begin{array}{cccc}
U&\Big[W_1 \hspace{0.15in} \cdots\hspace{0.15in} W_1\Big]&\cdots&\Big[W_{\beta} \hspace{0.15in} \cdots \hspace{0.15in} W_{\beta}\Big]\\\\

\left[\begin{array}{c}
Y_1\\
\vdots\\
Y_\gamma
\end{array}\right]&

\left[\begin{array}{ccc}
Z&&\\
&\ddots&\\
&&Z
\end{array}\right]&&0\\
\vdots&&\ddots&\\
\left[\begin{array}{c}
Y_1\\
\vdots\\
Y_\gamma
\end{array}\right]&0&&

\left[\begin{array}{ccc}
Z&&\\
&\ddots&\\
&&Z
\end{array}\right]\\

\end{array}\right]=
\left[\begin{array}{cc}
U&\hat{W}\\
\hat{Y}&\hat{Z}
\end{array}\right],
\]
where each $W_i\in\mathbb{R}^{\ell\times n-\ell}$, each $Y_j\in\mathbb{R}^{n-\ell\times \ell}$, $\sum_{i=1}^w W_i=W$ and $\sum_{j=1}^y Y_j=Y$. Here, $w\geq0$ is the number of directed edges from $G|S$ to $C_1$. The matrix $W_i$ has a single nonzero entry corresponding to exactly one edge from this set of edges. Similarly, $y\geq0$ is the number of directed edges from $C_1$ to $G|S$. The matrix $Y_i$ has a single nonzero entry corresponding to exactly one edge from this set of edges. Since there are $w\cdot y$ component branches in $\mathcal{X}_S(G)$ containing $C_1$ then the matrix $\hat{M}\in\mathbb{R}^{n+(n_1-1)\ell\times n+(n_1-1)\ell}$ where $n_1=w\cdot y$.

The claim is that by reducing both $M$ and $\hat{M}$ over $S$ the result is the same matrix. To see this note that by theorem \ref{thm:maintheorem} the reduced matrix $\mathcal{R}_S(M)$ is
\[
\mathcal{R}_S(M)=U-W(Z-\lambda I)^{-1}Y\in\mathbb{W}^{|S|\times|S|}.
\]
For the matrix $\hat{M}$ its reduction over $S$ is the matrix
\begin{align*}
\mathcal{R}_S(\hat{M})&=U-\hat{W}(\hat{Z}-\lambda I)^{-1}\hat{Y}\\
&=U-\hat{W} \ \text{diag}[(Z-\lambda I)^{-1},\dots,(Z-\lambda I)^{-1}]\hat{Y}\\
&=U-\sum_{i=1}^w\sum_{j=1}^y W_i(Z-\lambda I)^{-1}Y_j\\
&=U-\big(\sum_{i=1}^w W_i\big)(Z-\lambda I)^{-1}\big(\sum_{j=1}^y Y_j\big)\\
&=U-W(Z-\lambda I)^{-1}Y\in\mathbb{W}^{|S|\times|S|}.
\end{align*}
This verifies the claim that $\mathcal{R}_S(M)=\mathcal{R}_S(\hat{M})$.

Theorem \ref{thm:maintheorem} then implies that $\sigma(M)-\sigma(M_{\bar{S}\bar{S}})=\sigma(\hat{M})-\sigma(\hat{M}_{\bar{S}\bar{S}})$. Since $\sigma(M_{\bar{S}\bar{S}})=\sigma(C_1)$ and $\sigma(\hat{M}_{\bar{S}\bar{S}})=\sigma(C_1)^{n_1}$ we then have
\[
\sigma(G)-\sigma(C_1)=\sigma(\mathcal{X}_S(G))-\sigma(C_1)^{n_1}.
\]
Since $G$ has $n$ eigenvalues, $\mathcal{X}_S(G)$ has $n+(n_1-1)\ell$, and $C_1$ has $\ell$ eigenvalues respectively including multiplicities, it follows that
\begin{equation}\label{eq:last}
\sigma(\mathcal{X}_S(G))=\sigma(G)\cup\sigma(C_1)^{n_1-1}
\end{equation}
so that theorem \ref{thm1} holds in the case that $G|\bar{S}$ has a single strongly connected component $C_1$.

If $C_1,\dots,C_m$ are the components of the restricted graph $G|\bar{S}$ where $m>1$ then we continue inductively. Let $S_i\subset V$ be the vertices that do not belong to $C_i$ or $i=1,\dots, m$. Then the evolution $\mathcal{X}_{S_1}(G)$ has eigenvalues
\[
\sigma(\mathcal{X}_{S_1}(G))=\sigma(G)\cup\sigma(C_1)^{n_1-1}
\]
by equation \eqref{eq:last}, where $n_1$ is the number of component branches in $\mathcal{B}_{S_1}(G)$ containing $C_1$. Since
\[
\mathcal{X}_{S_1-\bar{S}_2}(G)=\mathcal{X}_{S_1}\Big(\mathcal{X}_{S_2}\big(\mathcal{X}_{S_1}(G)\big)\Big)
\]
then by repeated use of the same argument that
\[
\sigma(\mathcal{X}_{S_1-\bar{S}_2}(G))=\sigma(G)\cup\sigma(C_1)^{n_1-1}\cup\sigma(C_2)^{n_2-1},
\]
where $n_1$ and $n_2$ are the number of component branches in $\mathcal{B}_{S_1-\bar{S}_2}(G)$ containing $C_1$ and $C_2$, respectively. Continuing in this manner it follows that
\[
\sigma(\mathcal{X}_{S}(G))=\sigma(G)\cup\sigma(C_1)^{n_1-1}\cup\dots\cup\sigma(C_m)^{n_m-1},
\]
where $n_i$ is the number of component branches in $\mathcal{B}_{S}(G)$ containing $C_i$ for all $i=1,\dots,m$; since $S_1-\cup_{i=2}^m\bar{S}_i=S$. This completes the proof.
\end{proof}

A proof of proposition \ref{prop:0} is based on the following result relating the eigenvectors of the a graph $G$ and its reduction $\mathcal{R}_S(G)$.

\begin{theorem}\label{thm:reduction}\textbf{(Eigenvectors of Reduced Matrices)}
Suppose $M\in\mathbb{R}^{n\times n}$ and $S\subseteq N$. If $(\lambda,\mathbf{v})$ is an eigenpair of $M$ and $\lambda\notin \sigma(M_{\bar{S}\bar{S}})$ then $(\lambda,\mathbf{v}_S)$ is an eigenpair of $\mathcal{R}_S(M)$.
\end{theorem}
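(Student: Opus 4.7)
The plan is to prove this by a direct block-matrix manipulation of the eigenvalue equation. After reordering indices so that the rows and columns corresponding to $S$ come first, write $M$ in block form and write $\mathbf{v} = (\mathbf{v}_S,\mathbf{v}_{\bar S})^T$. The equation $M\mathbf{v}=\lambda\mathbf{v}$ then splits into the two block equations
\[
M_{SS}\mathbf{v}_S + M_{S\bar S}\mathbf{v}_{\bar S} = \lambda\mathbf{v}_S,
\qquad
M_{\bar S S}\mathbf{v}_S + M_{\bar S\bar S}\mathbf{v}_{\bar S} = \lambda\mathbf{v}_{\bar S}.
\]

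The key observation is that the hypothesis $\lambda\notin\sigma(M_{\bar S\bar S})$ makes $M_{\bar S\bar S}-\lambda I$ invertible. Solving the second block equation for $\mathbf{v}_{\bar S}$ yields
\[
\mathbf{v}_{\bar S} = -\bigl(M_{\bar S\bar S}-\lambda I\bigr)^{-1} M_{\bar S S}\mathbf{v}_S,
\]
and substituting this into the first block equation produces
\[
\bigl[M_{SS} - M_{S\bar S}(M_{\bar S\bar S}-\lambda I)^{-1} M_{\bar S S}\bigr]\mathbf{v}_S = \lambda\mathbf{v}_S,
\]
which is exactly $\mathcal{R}_S(M)(\lambda)\,\mathbf{v}_S = \lambda\mathbf{v}_S$ by definition \ref{def:isored}.

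The remaining technical point, and the only real obstacle, is to verify that $\mathbf{v}_S\neq 0$ so that it genuinely qualifies as an eigenvector rather than the trivial zero vector. This follows from the hypothesis once more: if $\mathbf{v}_S=0$, the second block equation reduces to $M_{\bar S\bar S}\mathbf{v}_{\bar S} = \lambda\mathbf{v}_{\bar S}$, and since $\mathbf{v}\neq 0$ forces $\mathbf{v}_{\bar S}\neq 0$, we would get $\lambda\in\sigma(M_{\bar S\bar S})$, contradicting the assumption. Hence $\mathbf{v}_S\neq 0$ and $(\lambda,\mathbf{v}_S)$ is an eigenpair of $\mathcal{R}_S(M)$, completing the argument.
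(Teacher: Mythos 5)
Your proof is correct and follows essentially the same route as the paper: write $M\mathbf{v}=\lambda\mathbf{v}$ in block form, use $\lambda\notin\sigma(M_{\bar S\bar S})$ to solve the second block equation for $\mathbf{v}_{\bar S}=-(M_{\bar S\bar S}-\lambda I)^{-1}M_{\bar S S}\mathbf{v}_S$, and substitute into the first to obtain $\mathcal{R}_S(M)\mathbf{v}_S=\lambda\mathbf{v}_S$. Your closing check that $\mathbf{v}_S\neq 0$ is a worthwhile addition that the paper's own proof leaves implicit.
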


\begin{proof}
Suppose $(\lambda,\mathbf{v})$ is an eigenpair of $M$ and $\lambda\notin \sigma(M_{\bar{S}\bar{S}})$. Then without loss in generality
we may assume that $\mathbf{v}=(\mathbf{v}_ S^T,\mathbf{v}_{\bar{S}}^T)^T$. Since $M\mathbf{v}=\lambda\mathbf{v}$ then
\[
\left[\begin{array}{cc}
M_{SS}&M_{S\bar{S}}\\
M_{\bar{S}S}&M_{\bar{S}\bar{S}}
\end{array}\right]
\left[\begin{array}{c}
\mathbf{v}_S\\
\mathbf{v}_{\bar{S}}
\end{array}\right]=
\lambda \left[\begin{array}{c}
\mathbf{v}_S\\
\mathbf{v}_{\bar{S}}
\end{array}\right],
\]
which yields two equations the second of which implies that
\[
M_{\bar{S}S}\mathbf{v}_S+M_{\bar{S}\bar{S}}\mathbf{v}_S=\lambda\mathbf{v}_{\bar{S}}.
\]
Solving for $\mathbf{v}_{\bar{S}}$ in this equation yields
\begin{equation}\label{eq:bar}
\mathbf{v}_{\bar{S}}=-(M_{\bar{S}\bar{S}}-\lambda I)^{-1}M_{\bar{S}S}\mathbf{v}_{ S},
\end{equation}
where $M_{\bar{S}\bar{S}}-\lambda I$ is invertible given that $\lambda\notin \sigma(M_{\bar{S}\bar{S}})$.

Note that
\begin{align*}
(M-\lambda I)\mathbf{v}&=
\left[\begin{array}{c}
(M-\lambda I)_{ SS}\mathbf{v}_{ S}+(M-\lambda I)_{ S\bar{S}}\mathbf{v}_{\bar{S}}\\
(M-\lambda I)_{\bar{S}S}\mathbf{v}_{ S}+(M-\lambda I)_{\bar{S}\bar{S}}\mathbf{v}_{\bar{S}}
\end{array}\right]\\
&=
\left[\begin{array}{c}
M_{ SS}\mathbf{v}_{ S}-M_{ S\bar{S}}(M_{\bar{S}\bar{S}}-\lambda I)^{-1}M_{\bar{S}S}\mathbf{v}_{ S}\\
M_{\bar{S}S}\mathbf{v}_{ S}-(M_{\bar{S}\bar{S}}-\lambda I)(M_{\bar{S}\bar{S}}-\lambda I)^{-1}M_{\bar{S}S}\mathbf{v}_{ S}
\end{array}\right]\\
&=\left[\begin{array}{c}
(\mathcal{R}_S(M)-\lambda I)\mathbf{v}_{ S}\\
0
\end{array}\right].
\end{align*}
Since $(M-\lambda I)\mathbf{v}=0$ it follows that $(\lambda,\mathbf{v}_S)$ is an eigenpair of $\mathcal{R}_S(M)$.

Moreover, we observe that if $(\lambda,\mathbf{v}_S)$ is an eigenpair of $\mathcal{R}_S(M)$ then by reversing this argument, $\big(\lambda,(\mathbf{v}_ S^T,\mathbf{v}_{\bar{S}}^T)^T\big)$ is an eigenpair of $M$ where $\mathbf{v}_{\bar{S}}$ is given by \eqref{eq:bar}.
\end{proof}

We now give a proof of proposition \ref{prop:0}.

\begin{proof}
Let $M=M(G)$ and $\hat{M}=M(\mathcal{X}_S(G))$ where $S\subseteq N$. If $(\lambda,\mathbf{v})$ is an eigenpair of $M$ and $\lambda\notin \sigma(M_{\bar{S}\bar{S}})$ then theorem \ref{thm:reduction} implies that $(\lambda,\mathbf{v}_S)$ is an eigenpair of $\mathcal{R}_S(M)$. Using the fact that $\mathcal{R}_S(\mathcal{X}_S(M))=\mathcal{R}_S(M)$ (see the proof of theorem \ref{thm1}) and the observation in the last line of the proof of theorem \ref{thm:reduction} it follows that $(\lambda,\hat{\mathbf{v}})$ is an eigenpair of $\hat{M}$ where
\[
\hat{\mathbf{v}}=
\left[
\begin{array}{c}
\hat{\mathbf{v}}_S\\
\hat{\mathbf{v}}_{\bar{S}}
\end{array}
\right]=
\left[
\begin{array}{c}
\mathbf{v}_S\\
-(\hat{M}_{\bar{S}\bar{S}}-\lambda I)^{-1}\hat{M}_{\bar{S}S}\mathbf{v}_{ S}
\end{array}
\right].
\]
Note that $\mathbf{v}_S=\hat{\mathbf{v}}_S$, which completes the proof.
\end{proof}

We now give a proof of proposition \ref{prop10}.

\begin{proof}
Suppose that $G=(V,E)$ is strongly connected and $S\subset V$. Since the $\mathcal{X}_S(G)$ preserves the path structure of $G$, i.e. there is a path from $v_i$ to $v_j$ in $\mathcal{X}_S(G)$ if and only if there is a path from the corresponding $v_i$ to $v_j$ in $G$, then $\mathcal{X}_S(G)$ must be strongly connected. Therefore, both $G$ and $\mathcal{X}_S(G)$ have eigencentrality vectors.

Given that $M(G)$ is a nonnegative matrix, theorem \ref{thm1} together with corollary 8.1.20 in \cite{HJ90} imply that $G$ and $\mathcal{X}_S(G)$ have the same spectral radius $\rho$. Since $\rho$ is a simple eigenvalue of both $G$ and $\mathcal{X}_S(G)$, proposition \ref{prop:0} implies that given an eigencentrality vector $\mathbf{p}$ of $G$ there is an eigencentrality vector $\mathbf{q}$ of $\mathcal{X}_S(G)$ such that $\mathbf{p}_S=\mathbf{q}_S$ completing the proof.
\end{proof}

\section{Acknowledgement} The work of L. A. Bunimovich was partially supported by the NSF grant DMS-1600568

\end{document}